\documentclass[11pt,letterpaper,reqno]{amsart}

\usepackage{mathtools}
\usepackage{amsthm,amssymb}
\usepackage{amsmath}
\usepackage[usenames,dvipsnames]{xcolor}
\usepackage{dsfont}
\usepackage{bm}
\usepackage{comment}

\usepackage[shortlabels]{enumitem}
\usepackage{hyperref}
\hypersetup{colorlinks = true, linkcolor = blue, citecolor = blue, urlcolor = blue}
\usepackage{cite}
\allowdisplaybreaks

\usepackage[margin=1in]{geometry}
%\usepackage[parfill]{parskip}

% moves addresses on title page
%\usepackage{amsaddr}
\usepackage{todonotes}

\usepackage
[
    ruled,         % Creates lines at the top and at the bottom. Further, the caption is now above the algorithm.
    vlined,        % Shows the scope of a statement spanning multiple lines via a small vertical bar. Thus, no closing tags are needed.
    linesnumbered, % Shows line numbers.
    procnumbered
]
{algorithm2e} % Allows to write pseudocode.

\usepackage{cleveref}

\newtheorem{theorem}{Theorem}
\newtheorem*{theorem*}{Theorem}

\newtheorem{remark}{Remark}

\newtheorem{lemma}[theorem]{Lemma}

\newtheorem{proposition}[theorem]{Proposition}
\newtheorem{definition}[theorem]{Definition}

\newtheorem*{claim*}{Claim}

\newtheorem{example}[theorem]{Example}
\newtheorem{assumption}[theorem]{Assumption}

\newcommand{\lam}{\lambda}
\newcommand{\Lam}{\Lambda}

\newcommand{\diff}{\mathrm{d}}
\newcommand{\E}{\mathbb{E}}
\newcommand{\R}{\mathbb{R}}

\newcommand{\N}{\mathbb{N}}

\newcommand{\C}{\mathbb{C}}
\newcommand{\D}{\mathbb{D}}
\newcommand{\ind}[1]{\mathds{1}_{#1}}
\newcommand{\absolute}[1]{\left\lvert #1 \right\rvert}

\newcommand{\V}[1]{\mathbf{#1}}
\newcommand{\vol}{\nu}
\newcommand{\borel}{\mathcal{B}}
\newcommand{\X}{\mathbb{X}}
\newcommand{\ball}{\mathbb{B}}
\newcommand{\diam}{\mathsf{diam}}

\DeclareRobustCommand{\stirling}{\genfrac\{\}{0pt}{}}

\title{Simple Analyticity Criteria for Repulsive Multi-body
  Potentials}

\author{Tyler Helmuth}
\address{Durham University, Department of Mathematical Sciences}
\email{tyler.helmuth@durham.ac.uk}

\author{Marcus Pappik}
\address{University of Potsdam, Hasso Plattner Institute}
\email{marcus.pappik@hpi.de}

\author{Will Perkins}
\address{Georgia Institute of Technology, School of Computer Science}
\email{math@willperkins.org}

\begin{document}

\begin{abstract}
  We prove a simple, explicit lower bound on the radius of a zero-free
  disk for Gibbs point processes defined by finite-range, repulsive
  multi-body interactions. Our lower bound improves on those
  previously known, and we demonstrate that it is essentially sharp in
  the generality with which our arguments apply. The key ingredient is
  a multi-body generalization of integral identities for point
  densities of Gibbs point processes in the spirit of earlier work of
  Michelen and Perkins.
\end{abstract}

\maketitle

\section{Introduction}
\label{sec:intro}

Many of the simplest models of statistical mechanics of interacting
particles arise by imposing constraints (or other interactions) on
Poisson point processes.  A well-known example is the \emph{two-body
  hard sphere gas}. In this case, the Poisson process $X$ has points
in $\R^{d}$, and the interaction is a pure constraint: it restricts
$X$ to configurations such that spheres of radius $r>0$ about each
point in $X$ are disjoint. Equivalently, no point in space is within
distance $r$ of two distinct elements of $X$. Despite the simplicity
of this model, relatively little is rigorously understood; see, e.g.,
the introductions
of~\cite{helmuth2022correlation,michelen2023analyticity}
or~\cite[Section~4]{hukushima2025damage}.

This paper is motivated by the following question. Modify the two-body
hard sphere gas to obtain a three-body hard sphere gas. That is,
subject $X\subset \R^{d}$ to the constraint that no point in space is
within distance $r$ of \emph{three} distinct elements of $X$. For a
fixed \emph{a priori} intensity of the underlying Poisson process, it
seems clear that if the two-body hard sphere model is in a dilute
(gaseous) phase, then so must be the three-body hard sphere
model. Unfortunately, known criteria on the intensity that
provably ensure a model is in a dilute phase do not behave in the
expected manner: they yield a larger dilute phase region for the
two-body hard sphere gas than the three-body hard sphere gas. We
discuss this further in \Cref{sec:prior-results} below. In the preceding
sentences the phrase `dilute' should be interpreted as a synonym for
analyticity of the pressure of the models as a function of the
activity of particles (intensity of the point process).

The central issue seems to be that while analyticity criteria for
multi-body continuum particle systems have been explored (see, e.g.,
~\cite{Moraal,procacci2000gas,rebenko2005polymer,rebenko1997convergence,Greenberg1971}),
known results have been primarily concerned with existential
statements. In particular, existing treatments can be analytically
inconvenient and/or inefficient -- they guarantee the existence of
activities small enough that analyticity is true, but without easy
control of precisely how small. Motivated by this, the present paper
develops a simple criteria that readily verifies the intuitively clear
statement above. That is, we give explicit domains of analyticity for
the pressure of multi-body \emph{repulsive} finite-range particle
systems.

Despite the somewhat frivolous question that initiated this work, we
think our results reveal some important questions and perspectives. We
defer our discussion of these matters, as well as connections to
related literature, until after precisely describing our main results.

\subsection{Models and main result}
\label{sec:model-results}

While our results are more general, for notational ease in this
section let $\X = \R^{n}$ be $n$-dimensional Euclidean space.  A
\emph{potential} $\phi$ is a collection of (measurable) functions
$(\phi_k)_{k \in \N}$, $\phi_{k}\colon \X^{k} \to \R\cup
\{\infty\}$. Given $\V{x}\in \X^{k}$, we write
$\phi(\V{x})=\phi_{k}(\V{x})$.  We will always assume that $\phi$ is
\emph{symmetric}, i.e., for each $k \ge 2$,
$\phi_{k}\circ \pi = \phi_{k}$ for all permutation $\pi$ of the $k$
coordinates $\V{x}\in \X^{k}$. We say $\phi$ is \emph{repulsive} if
$\phi_k\geq 0$ for all $k \in \N$.

Given $\V{x}\in \X^{k}$ and $S\subset [k]\coloneqq \{1,2,\dots, k\}$,
let $\V{x}_{S} = (x_{i})_{i\in S}$.  Any $\phi$ defines a Hamiltonian
by
\begin{equation*}
  H(\V{x}) \coloneqq \sum_{S\neq\emptyset} \phi(\V{x}_{S}),
\end{equation*}
where the sum ranges over non-empty subsets of $[k]$ if
$\V{x}\in \X^{k}$.  For $\lambda\in \C$ and $\Lambda \subseteq \X$
bounded and measurable the associated \emph{partition function} is
\begin{equation}
  \label{eq:PF}
    Z_{\Lambda,\phi}(\lam) \coloneqq \sum_{k \in \N_0} \frac{\lam^k}{k!} \int_{\Lambda^k} e^{-H(\V{x})} \vol^{k}(\diff \V{x}),
\end{equation}
where $\vol^{k}$ denotes the $k$-fold product of Lebesgue measure
$\vol$ on $\R^{n}$ with itself. The parameter $\lambda$ is called the
\emph{activity}. The assumption that $\phi$ is repulsive ensures that
$Z_{\Lambda,\phi}$ exists, and moreover,
$|Z_{\Lambda,\phi}|\leq \exp( |\lambda| \vol(\Lambda))$. In
particular, for $\lambda \ge 0$, one obtains a point process $\mu$ in
$\Lambda$ by defining an expectation
\begin{equation*}
    \E_{\mu}[f] \coloneqq \frac{1}{Z_{\Lambda, \phi}(\lambda)} \sum_{k \ge 0} \frac{\lambda^k}{k!} \int_{\Lambda^k} f(\delta_{x_1} + \dots + \delta_{x_k}) e^{-H(\V{x})} \vol^{k}(\diff \V{x}).
\end{equation*}
to every measurable function $f\colon \mathcal{N} \to \R_{\ge 0}$ on the space of locally-finite counting measures $\mathcal{N}$.

We write $\diam(x_1, \dots, x_k) \coloneqq \max_{i, j} d(x_i, x_j)$ for the diameter of a tuple of points, and we say a potential $\phi$ has range (at most) $R$ if $\phi(\V{x}) = 0$ whenever $\diam(\V{x}) > R$. 
Our main result is the following. 
\begin{theorem}
  \label{thm:main} Suppose $\phi$ is repulsive with finite range $R$,
  and let $B_{R}$ denote that volume of a ball of radius $R$. There is
  a $C>0$ such that if $\absolute{\lambda} < (eB_{R})^{-1}$,
  \begin{equation*}
\absolute{\log(Z_{\Lambda, \phi}(\lambda))} \le C \cdot
  \vol(\Lambda).
\end{equation*}
\end{theorem}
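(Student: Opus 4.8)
The bound $|Z_{\Lambda,\phi}(\lambda)| \le e^{|\lambda|\vol(\Lambda)}$ recorded just after \eqref{eq:PF} is already the upper half of the estimate, so the real content is the matching lower bound $|Z_{\Lambda,\phi}(\lambda)| \ge e^{-C\vol(\Lambda)}$; equivalently, $Z_{\Lambda,\phi}$ is zero-free on the disk $\D_R \coloneqq \{\lambda\in\C : |\lambda| < (eB_R)^{-1}\}$ and $\log Z_{\Lambda,\phi}$ is controlled there. The plan is to route everything through the one-point density
\[
  \rho_\Lambda(\lambda;x) \coloneqq \frac{\lambda}{Z_{\Lambda,\phi}(\lambda)}\sum_{k\in\N_0}\frac{\lambda^k}{k!}\int_{\Lambda^k} e^{-H(x,\V{y})}\,\vol^k(\diff\V{y}),
\]
defined wherever $Z_{\Lambda,\phi}(\lambda)\neq 0$. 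Differentiating \eqref{eq:PF} gives $\lambda\,\frac{\diff}{\diff\lambda}\log Z_{\Lambda,\phi}(\lambda) = \int_\Lambda \rho_\Lambda(\lambda;x)\,\vol(\diff x)$, and since $\rho_\Lambda(\lambda;x)/\lambda$ has a finite limit (at most $1$) as $\lambda\to0$, integrating along the segment from $0$ to $\lambda$ gives
\[
  \log Z_{\Lambda,\phi}(\lambda) = \int_0^1 \frac{1}{s}\int_\Lambda \rho_\Lambda(s\lambda;x)\,\vol(\diff x)\,\diff s ,
\]
valid as long as $Z_{\Lambda,\phi}$ does not vanish on that segment. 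Thus the theorem reduces to the uniform estimate $|\rho_\Lambda(\lambda;x)| \le e|\lambda|$ for $\lambda\in\D_R$ and all $x$: it gives $|\log Z_{\Lambda,\phi}(\lambda)| \le e|\lambda|\vol(\Lambda) < B_R^{-1}\vol(\Lambda)$, so one may take $C = B_R^{-1}$.

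To bound the density I would derive a multi-body analogue of the Kirkwood--Salsburg / Michelen--Perkins identity. Inserting a point at $x$ into a configuration $\V{y}$ shifts the Hamiltonian by $\Delta_x H(\V{y}) \coloneqq \sum_T \phi(x,\V{y}_T)$, the sum over all subsets $T$ of the index set of $\V{y}$ (the term $T=\emptyset$ being the one-body part). Writing $e^{-\Delta_x H} = 1 + (e^{-\Delta_x H}-1)$ and resumming the Mayer-type factor $e^{-\Delta_x H}-1$, which lies in $(-1,0]$ by repulsiveness, expresses $\rho_\Lambda(\lambda;x)$ as a convergent series of integrals of products of such factors against lower-order correlation functions $\rho_\Lambda^{(m)}$; finite range confines the integration variables coupled to $x$ to a ball of volume $B_R$ about $x$. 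Feeding the ansatz $|\rho_\Lambda^{(m)}(\lambda;\cdot)|\le\xi^m$ into this identity and optimizing collapses the hierarchy to the scalar inequality $\xi\le|\lambda|\,e^{B_R\xi}$; exactly as in the classical tree-function bound this is solvable precisely when $|\lambda|B_R\le e^{-1}$, its least solution obeys $\xi\le e|\lambda|$, and $B_R\xi<1$ strictly on the open disk --- which is the required density estimate.

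The one genuinely delicate issue is that $\rho_\Lambda$ is defined only where $Z_{\Lambda,\phi}\neq0$, so the density estimate must be established simultaneously with zero-freeness; I would do this by a bootstrap along the ray through $\lambda$. Writing $\lambda = |\lambda|e^{i\theta}$, let $s_*$ be the supremum of those $s\in[0,(eB_R)^{-1})$ such that $Z_{\Lambda,\phi}(ue^{i\theta})\neq0$ and $|\rho_\Lambda(ue^{i\theta};x)|\le eu$ for all $u\in[0,s]$ and all $x$; the defining conditions hold for small $u$, so $s_*>0$. If $s_*<(eB_R)^{-1}$, then $Z_{\Lambda,\phi}\neq0$ on $[0,s_*e^{i\theta}]$ by continuity, the integral formula bounds $|\log Z_{\Lambda,\phi}(s_*e^{i\theta})|$ by $B_R^{-1}\vol(\Lambda)<\infty$ so $Z_{\Lambda,\phi}(s_*e^{i\theta})\neq0$ and $\rho_\Lambda$ extends past $s_*e^{i\theta}$, and the strict inequality $B_R\xi<1$ in the tree bound gives $|\rho_\Lambda(s_*e^{i\theta};x)|\le\xi(s_*)<es_*$, whence the density bound survives slightly past $s_*$ by continuity --- contradicting maximality. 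Hence $s_*=(eB_R)^{-1}>|\lambda|$, so both hypotheses needed for the reduction hold on $[0,\lambda]$, and the theorem follows. The only work beyond the pairwise case is the combinatorics of the multi-body identity: one must check that after resumming, the bounded factors $e^{-\Delta_x H}-1$ together with the range-$R$ localization still produce exactly the constant $B_R$ in $\xi\le|\lambda|\,e^{B_R\xi}$, and not a larger one.
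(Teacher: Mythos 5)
Your reduction of the theorem to a uniform density bound is sound in outline, and your bootstrap along the ray mirrors the paper's continuity argument over disks (the paper instead expresses $\log Z$ by a spatial peeling, Lemma~\ref{lemma:partitionfunctionFromDensity}, rather than by integrating $\rho/\lambda$ in the activity, but that difference is cosmetic). The genuine gap is exactly the step you defer to at the end: the claim that the multi-body hierarchy ``collapses to the scalar inequality $\xi \le |\lambda|\, e^{B_R \xi}$'' with constant exactly $B_R$. This is not a routine check --- it is the entire technical content of the paper, and the naive estimate fails to deliver it. After resumming, the recursion for the one-point density involves, for every $k\ge 1$, an integral over $k$-tuples $\V{w}$ coupled to $x$ through $1-e^{-\phi(x,\V{w})}$, weighted by a $k$-point density. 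Feeding in $|\rho^{(k)}|\le \xi^k$ and using the range-$R$ localization bounds the $k$-th term by $\tfrac{1}{k!}(B_R\xi)^k$, so the sum in the exponent is bounded by $e^{B_R\xi}-1$, not by $B_R\xi$. At the putative fixed point $\xi=e|\lambda|$ with $|\lambda|B_R=e^{-1}$ one has $e^{B_R\xi}-1=e-1\approx 1.72>1=B_R\xi$, so the scalar inequality you need simply does not follow from this bookkeeping; what one gets instead is a strictly smaller disk, essentially the $(2eB_R)^{-1}$ of Moraal's Kirkwood--Salsburg treatment that the paper is improving on.

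Closing this gap is what the paper's Lemma~\ref{claim:boundedRange} does: it proves that $\sum_{k}\frac{z^k}{k!}\int_{\Lambda^k}(1-e^{-\phi(y,\V{x})})e^{-H(\V{x}\mid y_{\prec\V{x}})}\,\vol^k(\diff\V{x})\le 1$ for $z\le B_R^{-1}$ by an induction that absorbs the $(N{+}1)$-st term into the $N$-th via a modified potential $\hat\phi$, crucially exploiting the factor $e^{-H(\V{x}\mid y_{\prec\V{x}})}$ inside the integrand. That factor is only available because the identity is derived for the \emph{modified} densities $\kappa$ with the partial-pinning/ordering device (the function $D$, Lemma~\ref{lem:orderproduct}), not for the classical correlation functions you work with; a second, related gap in your sketch is that ``resumming the Mayer-type factors'' over all subsets $T$ coupled to $x$ requires exactly this ordering to avoid over-counting and to produce a recursion whose deeper levels carry the damping factors needed for the induction. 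Without both ingredients, the constant in the exponent is not $B_R$ and the radius $(eB_R)^{-1}$ is not reached.
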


\Cref{thm:main} implies that the limiting \emph{pressure}
\begin{equation*}
  p(\lambda) \coloneqq \lim_{\Lambda\uparrow \X} \frac{1}{\vol(\Lambda)} \log
  Z_{\Lambda,\phi}(\lambda) 
\end{equation*}
is an analytic function on the open disk $|\lambda| < (e
B_{R})^{-1}$ whenever the limit exists (e.g., if $\X=\R^d$, $\phi$ is translation invariant,\footnote{We call $\phi$ translation invariant if $\phi(x_1, \dots, x_k) = \phi(x_1 + y, \dots, x_k + y)$ for all $y \in \R^d$.} and we consider a van Hove sequence of regions \cite{ruelle1969statistical}). 
Correspondingly, there are no phase transitions for the
particle system associated to the potential $\phi$ when
$|\lambda|<(eB_{R})^{-1}$. 

\begin{example}
  \label{ex:kbody}
  The \emph{$k$-body hard sphere gas} has only the single non-zero term
  \begin{equation*}
    \phi_{k}(x_{1},\dots, x_{k}) = \begin{cases}
        \infty &\text{ if } \exists y \in \X: \{x_1, \dots, x_k\} \subseteq \ball_y(r) \\
        0 &\text{ otherwise}
    \end{cases},
  \end{equation*}
  where $\ball_y(r)$ denotes the closed ball with radius $r$ and center $y$.
  Thus, $\phi$ forbids $k$ spheres of radius $r$ from overlapping at a single
  point in space. \Cref{thm:main} gives the \emph{same} disk of analyticity
  $|\lambda|\leq (eB_{2r})^{-1}$, irrespective of the value of $k$. 
\end{example}

\subsection{Discussion and Comparison with Existing Results}
\label{sec:further-discussion}

\subsubsection{Prior results}
\label{sec:prior-results}

Prior results have generally imposed either restrictive criteria
(e.g., requiring a hard-core pair
potential~\cite{procacci2000gas,rebenko2005polymer}), imposed analytic
requirements on potentials for technical
purposes~\cite{Moraal,Greenberg1971}, or some combination of
both. The imposition of a hard-core is a natural criteria that removes
one of the central difficulties in the analysis of multi-body
potentials, see \Cref{sec:further-discussion-1} below.

The criteria in \cite[Theorem~2.6]{Greenberg1971} are rather difficult
to apply if one wishes to obtain explicit estimates, even in the case
of repulsive potentials. The criteria in~\cite{Moraal} (see
also~\cite{gielerak1989mayer}), however, are relatively simple for
potentials of the form considered in the present paper, as the
condition of \emph{regularity} used in~\cite{Moraal} is satisfied by
repulsive potentials. This leads to bounds
$|\lambda|\leq (2 e B_{R})^{-1}$, which is only slightly worse than
the bound $(e B_{R})^{-1}$ the Kirkwood-Salsburg method gives for
two-body potentials. Our improvement of removing this factor of two is
relatively modest. However, we think our method more transparently shows the
difficulties in obtaining stronger results.

\subsubsection{Proof techniques}
\label{sec:proof-techniques}
We prove \Cref{thm:main} by establishing a recursive integral identity
for (a modified version of) the point densities of finite-volume Gibbs
point processes with repulsive multi-body potentials. See
\Cref{thm:itegral_identity}.  Provided that this integral identity is
a contraction on a rich enough class of potentials and $|\lambda|$ is
small enough, this yields uniform bounds on the modified point
densities.  \Cref{thm:main} then follows by expressing
$\log(Z_{\Lambda, \phi}(\lambda))$ in terms of modified point
densities.

Our approach is inspired by the work of Michelen and Perkins, who
derived a similar integral identity for repulsive pair
potentials~\cite{michelen2023analyticity,michelen2022strong,michelen2021potential}.
As discussed by Michelen and Perkins, their approach was inspired by
Weitz's tree recursion for discrete spin
systems~\cite{weitz2006counting}. In a similar way our identity can be
viewed as an adaptation of a tree recursion for spin systems on
hypergraphs~\cite{liu2014fptas,bencs2025optimal} to the continuum
setting of Gibbs point processes.

While our integral identity (\Cref{thm:itegral_identity}) is derived
using ideas very similar to those applied in the case of two-body
potentials, it should be noted that the derivation
in~\cite{michelen2023analyticity} uses the classical point density
functions associated to a Gibbs point process.  In contrast, our
identity concerns a modified version of these densities, which allows
for a finer control over the terms that appear at deeper levels of the
recursion. In fact, the modified point densities used in our identity
can be seen as lying midway between the classical point densities and
a type of thermodynamic ratio that was studied for pair potentials in
a largely unnoticed paper by Meeron~\cite{meeron1970bounds}.  We note,
however, that although the recursive identity
in~\cite{meeron1970bounds} has a similar flavor to that
in~\cite{michelen2023analyticity}, it remains unclear if Meeron's
analysis has a useful extension to the multi-body setting.

While for simplicity we have used our method only to obtain disks of
analyticity, our method is not intrinsically restricted to such
domains. One could attempt to find complex domains that contain larger
segments of the positive real axis on which our integral identity is a
contraction. Such a program has been carried out for two-body
potentials~\cite{michelen2023analyticity}, as well as for the
hypergraph tree recursion for discrete spin systems that our
identities mimic~\cite{bencs2025optimal}.

\subsubsection{Further discussion}
\label{sec:further-discussion-1}

\Cref{thm:main} concerns general multi-body potentials, and the domain
of analyticity determined does not depend on either the minimum nor
maximum (if it exists) of the arities $k$ such that $\phi_{k}\neq
0$. One might wonder if a hypothesis that $\phi_{k}=0$ for all
$k\leq K$ can lead to an improvement (in the dependence on range) of bounds.  Somewhat surprisingly,
improved bounds are not possible in generality: in \Cref{sec:sharp} we
give an example that shows the disk of analyticity in \Cref{thm:main}
is (essentially) sharp. To go beyond our results (or those of~\cite{Moraal})
requires exploiting geometric properties of the model being
considered.

The impossibility of improved bounds is inspired (and established) by
analogous results in the discrete setting of hypergraphs. The discrete
analog of the hard sphere gas is the hard-core model (or independent
set model). We recall the relevant definitions in \Cref{sec:sharp};
they are not needed for the present discussion. On a usual graph
(edges all of size $2$) the optimal zero-free disk for the hard-core
partition function over graphs of maximum degree $\Delta$ is known and
has connections to the Lov\'{a}sz Local Lemma from
combinatorics~\cite{shearer1985problem,scott2005repulsive}.  For
$k$-uniform hypergraphs of maximum degree $\Delta$ (each edge of size
$k$), $k \ge 3$, one might expect an improved bound; it has been
proved recently that the zero-free disk is at least as large as that
for graphs~\cite{galvin2024zeroes,bencs2025optimal,neumann2025multi},
but the disk need not be much larger~\cite{zhang2025hypergraph}. This
is in spite of the fact that a dilute phase persists for a much larger
range of \emph{positive} activities
$\lambda$~\cite{hermon2019rapid,bezakova2019approximation,liu2025phase}.
The analogy between the hard-core model on hypergraphs and particles
interacting via a $k$-body interaction is not entirely accurate,
however. Since only one particle is allowed per vertex in a discrete
system, there is an implied short-range $2$-body hard-core interaction
which need not be present in the continuum system.

Relatedly, and recalling our motivating discussion in \Cref{sec:intro}
and Example~\ref{ex:kbody}, it is natural to wonder if the domain of
analyticity for $3$-body hard spheres is strictly greater than for
$2$-body hard spheres. While improvements by detailed calculation may
be feasible, a softer argument that applies to $k$-body hard spheres
would be preferable. It would be even more interesting to have a
non-perturbative argument that establishes that for $k\geq 3$, the
first point of non-analyticity on the positive real activity axis for
$k$-body hard spheres cannot occur before the corresponding point for
$(k-1)$-body hard spheres. That is, $k$-body hard spheres do not
undergo a phase transition before $(k-1)$-body hard
spheres.

One might also wonder if the strength of the potential can be
exploited to obtain improved bounds. To explain this by example,
consider the \emph{$k$-body $\alpha$-soft sphere model}, given by the
pure $k$-body interaction which has the single non-zero potential
\begin{equation*}
    \phi_{k}(x_{1},\dots, x_{k}) = \begin{cases}
        \alpha &\text{ if } \exists y \in \X: \{x_1, \dots, x_k\} \subseteq \ball_y(r) \\
        0 &\text{ otherwise}
    \end{cases},
  \end{equation*}
for some $\alpha \in (0,\infty)$. Intuitively, the low-density phase
of the $k$-body $\alpha$-soft sphere model should occur in a strictly
larger domain than for the $k$-body hard sphere model, and one might
expect the same is true for domains and/or disks of analyticity. Our
results do not establish this, however. The fundamental difficulty is
that there are configurations of particles in which there are many
particles at essentially the same location, and the potential created
by these particles can approximate the $k$-body hard sphere potential
arbitrarily well. It seems difficult to exploit the fact that such
configurations are entropically unlikely (or that there are associated
helpful $1$-body potentials created as well) with our method. 

It is perhaps worth remarking that in the special case of a $2$-body
interaction, it is well-known that a soft interaction \emph{does} lead
to an improved disk of analyticity via the temperedness constant. See,
e.g.,~\cite[Section~1.2]{michelen2023analyticity}.  In this special
case the improvement is straightforward since the potentials created
by particles are $1$-body potentials, which only have the effect of
decreasing the activity parameter. A method for analyzing $k$-body
potentials which clearly exhibited similar improvements would be
interesting.

\section{General setting and definitions}

Let $(\X, d)$ be a complete, separable metric space, and let $\borel$
be the Borel algebra generated by $d$.  Equip $(\X, \borel)$ with a
reference measure $\vol$, which we refer to as \emph{volume}. We will
always assume $\vol$ is locally finite, i.e., assigns finite volume to
bounded Borel sets.

As indicated earlier, the basic example to have in mind is
$\X=\R^{n}$, $d$ the Euclidean metric. For more general $\X$ we will
need to impose one ``smoothness'' assumption that Euclidean space
possesses. For every $x \in \X$, let $\vol_x$ denote the push-forward of $\vol$
under the map $y \mapsto d(x, y)$, i.e., for $U \subseteq \R$
measurable, 
$\vol_x(U) = \vol(\{y \in \X \mid d(x, y) \in U\})$.
\begin{assumption}
  \label{as:continuity}
  We assume $(X,d,\vol)$ has the property that for all $x\in \X$, the
  measure $\vol_x$ is absolutely continuous with respect to Lebesgue
  measure on $\R$.
\end{assumption}

In this generality we define potentials and partition functions
exactly as in \Cref{sec:model-results}; the only change is the
interpretation of the metric and volume
measure. \Cref{lem:as:continuity} below will describe how
\Cref{as:continuity} will be used.

\subsection{Pinnings and Modified Point Densities} \label{sec:densities}
Given a potential $\phi$ and a point $x \in \X$, we define the
\emph{potential with pinning $x$}, denote $\phi(\cdot \mid x)$, by
\begin{equation*}
    \phi(\V{y} \mid x) \coloneqq \phi(\V{y}) + \phi(\V{y}, x), \quad
    \V{y} \in \X^m, m \in \N. 
\end{equation*}
In this equation we have slightly abused notation;
$\phi(\V{y},x) = \phi( (\V{y},x))$ where $(\V{y},x)$ is interpreted as
an element of $\X^{m+1}$. We extend this modification to tuples of
points $\V{x}\in \X^{k}$ by
\begin{equation*}
  \phi(\V{y} \mid \V{x}) 
  \coloneqq 
  \sum_{S \subseteq [k]} \phi(\V{y}, \V{x}_{S}), 
\end{equation*}
where we have again abused notation slightly.  Denote by $H(\cdot
\mid \V{x})$ the Hamiltonian associated with $\phi(\cdot \mid
\V{x})$ and write $Z_{\Lambda, \phi}(\lambda \mid
\V{x})$ for the partition function on
$\Lambda$ with activity $\lambda\in
\C$ and modified potential $\phi(\cdot \mid \V{x})$.

Provided $Z_{\Lambda,\phi}(\lambda)\neq 0$, define \emph{modified
  (complex) $k$-point densities} by
\begin{equation}
  \label{eq:mkptdensity}
    \kappa_{\Lambda, \phi, \lambda}(\V{x}) \coloneqq \lambda^k \cdot \frac{Z_{\Lambda, \phi}(\lambda \mid \V{x})}{Z_{\Lambda, \phi}(\lambda)} ,
\end{equation}
where we allow ourselves to omit $\Lambda, \phi$ or $\lambda$ from the
notation if they are clear from the context.  Using a telescopic
product, modified $k$-point densities for $\V{x} = (x_1, \cdots, x_k)$
can be written as products of modified one-point densities:
\begin{align} \label{eq:telescoping}
    \kappa(\V{x}) 
    = \prod_{j = 1}^{k} \lambda \cdot \frac{Z_{\Lambda, \phi}(\lambda \mid x_1, \dots, x_{j})}{Z_{\Lambda, \phi}(\lambda \mid x_1, \dots, x_{j-1})} 
    =\prod_{j = 1}^{k} \kappa(x_j \mid x_1, \dots, x_{j-1}),
\end{align}
where $\kappa(x_j \mid x_1, \dots,
x_{j-1})$ are the modified one-point densities at
$x_j$ under the potentials $\phi(\cdot \mid x_1, \dots,
x_{j-1})$. The $j=1$ term is to be understood as $\kappa(x_1)$.

\begin{remark}
    For $\lambda\geq 0$, the modified densities are related to the
    classical $k$-point densities (sometimes also called correlation functions) via $\rho(\V{x}) = e^{-H(\V{x})}
    \cdot \kappa(\V{x})$, see~\cite[Chapter 4.1.1]{ruelle1969statistical}.
\end{remark}

\section{An integral identity for modified point densities}
The goal of this section is to establish \Cref{thm:itegral_identity},
an integral identity for modified point densities. The correct
formulation of the identity requires several preliminary definitions.

\subsubsection*{Ordering and Partial Pinning}
Fix an arbitrary $z\in \X$. Given a tuple of points $\V{x} = (x_1, \dots, x_k) \in \X^{k}$, define 
\begin{equation}
  \label{eq:Dchoice}
    D(\V{x}) \coloneqq \sum_{j \in [k]} d(z, x_j).
\end{equation}
The specific choice of $D$ in~\eqref{eq:Dchoice} is not crucial; what
is important are the following properties of the restriction $D_{k}$
of $D$ to $\X^{k}$.  The proof of the lemma will be deferred
to~\Cref{apx:ordering}.
\begin{lemma}
  \label{lemma:ordering}
  If $(\X, d, \vol)$ satisfies \Cref{as:continuity}, then for every $k\in\N$:
  \begin{enumerate}[(1), noitemsep]
  \item $D_k$ is symmetric and
    measurable. \label{lemma:ordering:measurability}
  \item 
    The push-forward of $\vol^k$ under $D_k$ is absolutely continuous
    with respect to the Lebesgue measure on $\R_{\ge 0}$.  
  \label{lemma:ordering:continuity}
    \item\label{lemma:ordering:maximum} For every $k \in \N$ and $\vol^k$-almost all
    $\V{x} \in \X^k$, it holds that $D$ induces a strict total order on subsets of $[k]$ by setting $T < S$ iff $D(\V{x}_T) < D(\V{x}_S)$.
  \end{enumerate}
\end{lemma}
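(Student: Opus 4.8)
The plan is to prove the three parts in order, with \Cref{as:continuity} supplying the essential input for parts~\ref{lemma:ordering:continuity} and~\ref{lemma:ordering:maximum}. Part~\ref{lemma:ordering:measurability} is immediate: $D_k(\V{x}) = \sum_{j \in [k]} d(z, x_j)$ is a finite sum of the continuous (hence Borel) functions $x_j \mapsto d(z, x_j)$, and it is symmetric since the sum ranges over all of $[k]$.

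For part~\ref{lemma:ordering:continuity}, note first that the push-forward of $\vol$ under $D_1 = d(z, \cdot)$ is by definition $\vol_z$, which is absolutely continuous by \Cref{as:continuity}; let $f_z \ge 0$ be a density of $\vol_z$ with respect to Lebesgue measure on $\R_{\ge 0}$. The push-forward $\mu_k$ of $\vol^k$ under $D_k$ is locally finite, because every summand $d(z, x_j)$ is nonnegative, so $D_k^{-1}([0, M]) \subseteq \{\V{x} : d(z, x_j) \le M \text{ for all } j\}$, a set of $\vol^k$-measure at most $\vol_z([0, M])^k < \infty$ (here $\vol_z([0,M]) = \vol(\{x : d(z,x) \le M\}) < \infty$ by local finiteness of $\vol$). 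In particular $\mu_k$ and $\vol_z$ are $\sigma$-finite, so, after the coordinatewise change of variables $t_j = d(z, x_j)$, Tonelli's theorem gives
\begin{equation*}
  \mu_k(A) = \int_{\R_{\ge 0}^k} \ind{t_1 + \cdots + t_k \in A} \prod_{j = 1}^k f_z(t_j) \, \diff t_1 \cdots \diff t_k
\end{equation*}
for every measurable $A \subseteq \R_{\ge 0}$. If $A$ is Lebesgue-null, then for each fixed $(t_1, \dots, t_{k-1})$ the set $\{t_k : \sum_j t_j \in A\}$ is a translate of $A$, hence Lebesgue-null, so the innermost integral vanishes; Tonelli then yields $\mu_k(A) = 0$. (Equivalently, $\mu_k$ is the $k$-fold convolution of the absolutely continuous measure $\vol_z$ with itself.)

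For part~\ref{lemma:ordering:maximum}, it suffices to prove that for each of the finitely many ordered pairs of distinct subsets $S \ne T$ of $[k]$ the ``tie set'' $E_{S,T} \coloneqq \{\V{x} \in \X^k : D(\V{x}_S) = D(\V{x}_T)\}$ is $\vol^k$-null; off the finite union $\bigcup_{S \ne T} E_{S,T}$ the map $S \mapsto D(\V{x}_S)$ is injective, which is exactly the statement that $D$ induces a strict total order on the subsets of $[k]$. Fix $S \ne T$ and pick an index $i \in S \symdiff T$. Holding $(x_j)_{j \ne i}$ fixed, the difference $D(\V{x}_S) - D(\V{x}_T)$ equals $\varepsilon \, d(z, x_i) + c$ with $\varepsilon = +1$ if $i \in S \setminus T$ and $\varepsilon = -1$ otherwise, and $c$ a constant depending only on the fixed coordinates. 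Hence the $x_i$-slice of $E_{S,T}$ is $\{x_i : d(z, x_i) = -\varepsilon c\}$, which has $\vol$-measure $\vol_z(\{-\varepsilon c\}) = 0$ because the singleton $\{-\varepsilon c\}$ is Lebesgue-null and $\vol_z$ is absolutely continuous. Integrating this vanishing slice measure over the remaining coordinates via Tonelli shows $\vol^k(E_{S,T}) = 0$, and the result follows.

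The conceptual content is light; I expect the only thing that needs care to be the measure-theoretic bookkeeping --- checking $\sigma$-finiteness so that Tonelli's theorem and the Radon--Nikodym decomposition apply, and verifying that the pointwise ``slice'' computations in parts~\ref{lemma:ordering:continuity} and~\ref{lemma:ordering:maximum} assemble correctly via Tonelli for nonnegative integrands.
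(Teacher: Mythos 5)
Your proof is correct and uses the same essential ingredients as the paper's: reduce everything to \Cref{as:continuity} via Fubini--Tonelli, using translation invariance of Lebesgue null sets for part (2) and the nullity of level sets of $d(z,\cdot)$ for part (3). The only difference is organizational --- the paper runs both arguments as inductions on $k$ (peeling off one coordinate at a time), whereas you do the $k$-fold Fubini in one shot (viewing the push-forward as a convolution power of $\vol_z$) and handle part (3) pair-by-pair with a single-coordinate slice; this is a cosmetic repackaging of the same argument.
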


The significance of part \ref{lemma:ordering:continuity} of the preceding lemma is the following.
\begin{lemma}
  \label{lem:as:continuity}
  Under \Cref{as:continuity}, for each $k\geq 1$, there is a family of measures
  $\{\vol_{t}^{k}\}_{t \in \R_{\ge 0}}$ such that
  \begin{enumerate}[noitemsep]
  \item $\vol^{k}_{t}$ assigns measure zero to the set $\{\V{w}\in
    \X^{k} \mid D(\V{w})\neq t\}$, and 
    \item there is a density $g_{k}$ on $\R_{\ge 0}$ such that
    $\int_{\X^{k}} h(\V{w}) \vol^k(\diff \V{w}) = \int_{\R_{\ge 0}} \int_{\X^{k}} h(\V{w})
    \vol^k_t(\diff \V{w}) g_k(t) \diff t$ for all $h\colon \X^{k} \to \C$ such
    that the integral exists.
  \end{enumerate}
\end{lemma}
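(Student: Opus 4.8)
The assertion is a disintegration of $\vol^k$ along the map $D_k \colon \X^k \to \R_{\ge 0}$ of~\eqref{eq:Dchoice}, combined with the absolute continuity of its push-forward that is supplied by \Cref{lemma:ordering}. The plan is as follows. First, set $\nu_{D_k} \coloneqq (D_k)_* \vol^k$, which is a well-defined Borel measure on $\R_{\ge 0}$ by the measurability asserted in part~\ref{lemma:ordering:measurability} of \Cref{lemma:ordering}; it is $\sigma$-finite because $\X$ is separable and $\vol$ is locally finite, so $\X$, and hence $\X^k$, is a countable union of finite-volume sets. By part~\ref{lemma:ordering:continuity} of \Cref{lemma:ordering}, $\nu_{D_k}$ is absolutely continuous with respect to Lebesgue measure on $\R_{\ge 0}$, so the Radon--Nikodym theorem provides a Borel density $g_k \colon \R_{\ge 0} \to \R_{\ge 0}$ with $\nu_{D_k}(\diff t) = g_k(t)\,\diff t$; we fix one such version of $g_k$.

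Second, I would apply the disintegration theorem. Since $(\X, d)$ is a complete separable metric space, $\X^k$ is Polish, hence a Borel space, and $D_k$ is a measurable map into $\R_{\ge 0}$. The disintegration theorem then yields a family $\{\vol^k_t\}_{t \in \R_{\ge 0}}$ of probability measures on $\X^k$ such that $\vol^k_t$ is concentrated on the fiber $D_k^{-1}(t) = \{\V{w} \in \X^k \mid D(\V{w}) = t\}$, the map $t \mapsto \vol^k_t(B)$ is measurable for every Borel $B \subseteq \X^k$, and
\begin{align*}
  \int_{\X^k} h(\V{w})\,\vol^k(\diff \V{w})
  &= \int_{\R_{\ge 0}} \left( \int_{\X^k} h(\V{w})\,\vol^k_t(\diff \V{w}) \right) \nu_{D_k}(\diff t) \\
  &= \int_{\R_{\ge 0}} \int_{\X^k} h(\V{w})\,\vol^k_t(\diff \V{w})\, g_k(t)\,\diff t
\end{align*}
for every measurable $h \colon \X^k \to \C$ for which the integral exists, where the last step uses the density $g_k$ from the first step. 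The concentration property gives the first claim of the lemma, since $\{\V{w} \mid D(\V{w}) \neq t\}$ is the complement of the fiber and therefore $\vol^k_t$-null; the displayed identity is the second claim.

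Third, I expect the only genuine subtlety to be the non-finiteness of $\vol^k$, since the disintegration theorem is most cleanly stated for finite (or probability) measures. I would handle this either by citing a version valid for $\sigma$-finite measures on Borel spaces, or by reducing to the finite case by hand: write $\X^k$ as a countable disjoint union of Borel sets $B_n$ of finite $\vol^k$-measure, disintegrate the restriction of $\vol^k$ to each $B_n$ to obtain probability kernels $\vol^{k,n}_t$ concentrated on $B_n \cap D_k^{-1}(t)$ together with the density $h_n$ of $(D_k)_*(\vol^k|_{B_n})$, observe that $\sum_n h_n = g_k$ almost everywhere, and set $\vol^k_t \coloneqq \sum_n \tfrac{h_n(t)}{g_k(t)} \vol^{k,n}_t$ on $\{g_k > 0\}$ (and any fixed measure concentrated on $D_k^{-1}(t)$ elsewhere); Tonelli's theorem then yields the displayed identity for nonnegative $h$, and linearity extends it to the general case. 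A secondary, purely bookkeeping point is to fix honest Borel versions of $g_k$ and of $t \mapsto \vol^k_t$, so that the pointwise statement ``$\vol^k_t$ is concentrated on $D_k^{-1}(t)$ for every $t$'' is literally meaningful. Beyond these points the argument is a black-box application of the Radon--Nikodym and disintegration theorems on Polish spaces, with no further input from the model or from \Cref{as:continuity} other than through \Cref{lemma:ordering}.
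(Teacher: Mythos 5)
Your proposal is correct and takes essentially the same route as the paper, which simply cites ``the disintegration theorem for the Lebesgue measure combined with part (2) of Lemma~\ref{lemma:ordering}''; you have merely spelled out the details (Radon--Nikodym for $g_k$, fiber concentration, the $\sigma$-finiteness reduction) that the paper leaves implicit.
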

\begin{proof}
  This is the disintegration theorem for the Lebesgue measure combined 
  with part \ref{lemma:ordering:continuity} of \Cref{lemma:ordering}. 
\end{proof}

For tuples $\V{x} \in \X^{k}, \V{y} \in \X^{m}$ and
$t \in [0, \infty]$ we write $\V{x} \prec t$ if $D(\V{x}) < t$ and
$\V{x} \prec \V{y}$ if $D(\V{x}) < D(\V{y})$.  Moreover, given
$y \in \X, \V{x} \in \X^k, \V{w} \in \X^m$, we use these relations to
define the modified potentials
\begin{align*}
    \phi(\V{x} \mid y_{\prec t})
        \coloneqq \begin{cases}
            \phi(\V{x} \mid y) &\text{ if $\V{x} \prec t$} \\
            \phi(\V{x}) &\text{ otherwise}
        \end{cases}, \qquad 
    \phi(\V{x} \mid y_{\prec \V{w}}) &\coloneqq \begin{cases}
        \phi(\V{x} \mid y) &\text{ if } \V{x} \prec \V{w} \\
        \phi(\V{x}) &\text{ otherwise.}
    \end{cases} 
\end{align*} 
Note that $\phi(\cdot \mid y) = \phi(\cdot \mid y_{\prec \infty})$ and
$\phi(\cdot \mid y_{\prec \V{w}}) = \phi(\cdot \mid y_{\prec
  D(\V{w})})$. We think of these modified potentials as being
\emph{partial pinnings}: the pinning at $\V{y}$ is imposed on configurations
that precede $\V{w}$ (according to $D$). 

As in \Cref{sec:densities} for pinned potentials, we extend this
notation to Hamiltonians, partition functions and modified point
densities.  For example, we write $H(\cdot \mid y_{\prec \V{w}})$,
$Z_{\Lambda, \phi}(\cdot \mid y_{\prec \V{w}})$ and
$\kappa(\cdot \mid y_{\prec \V{w}})$ for the Hamiltonian, partition
function and modified point density associated with the potential
$\phi(\V{x} \mid y_{\prec \V{w}})$.

Occasionally, it will be necessary to iterate pinnings/partial
pinnings. When this occurs, we separate the modifications we make with
a semicolon, and the modifications are made from left to right.  For
example,
$\phi(\cdot \mid y_{\prec \V{w}}; \V{z}) = \psi(\cdot \mid \V{z})$ is
the pinning of $\psi = \phi(\cdot \mid y_{\prec \V{w}})$ at $\V{z}$.
As before, we extend this notation to Hamiltonians, partition
functions, and point densities.

\subsubsection*{Sets of Potentials}
Recall that modified $k$-point densities are only well-defined when
the associated partition function is non-zero. To be able to discuss
$k$-point densities for a class of potentials, we define
the notion of being $\delta$-zero-free.

\begin{definition}
  \label{def:total_zero_free} Let
  $\delta > 0$. A collection of potentials $\Phi$ is
  \emph{$\delta$-zero-free} for a bounded and measurable
  $\Lambda \subset \X$ and an activity $\lambda \in \C$ if
  $|Z_{\Lambda, \phi}(\lambda)| \ge \delta$ for all $\phi\in\Phi$.
\end{definition}
  
Our integral identity will relate $k$-point densities to $k$-point
densities associated with modified potentials.  Accordingly, we
introduce the following definition.
\begin{definition}
  \label{def:closed_under_modification}
  A collection of potentials $\Phi$ is \emph{closed under
    modification} if
    \begin{enumerate}[(1),noitemsep]
    \item for every $\phi \in \Phi$, $y \in \X$ and $t \in [0, \infty]$ it holds that
      $\phi(\cdot \mid y_{\prec t}) \in \Phi$, and
    \item for all $\phi \in \Phi$ and measurable $\Delta \subseteq
      \X$, the potentials $\psi$ with   $\psi_k = \phi_k$ for all $k \ge 2$ and
        \begin{equation*}
            \psi_1(x) = \begin{cases}
                \infty &\text{ if $x \in \Delta$}\\
                \phi_1(x) &\text{ otherwise }
            \end{cases}
        \end{equation*}
        are contained in $\Phi$. 
    \end{enumerate}
\end{definition}

\begin{remark}
    \label{rem:closed}
    Property~(2) is a consistency condition that allows for the
    imposition of boundary conditions.  Property~(1) ensures that the
    $\Phi$ is closed under (partial) pinnings of a single point, using
    that, for every $y \in \X$,
    $\phi(\cdot \mid y) = \phi(\cdot \mid y_{\prec \infty})$.  This
    further extends to pinnings of tuples $\V{y} \in \X^k$ via the
    recursion
    $\phi(\V{x} \mid \V{y}) = \psi(\V{x} \mid y_k) \text{ for }
    \psi(\V{x}) \coloneq \phi(\V{x} \mid \V{y}_{[k-1]})$.
\end{remark}

Sets of potentials that are closed under modification satisfy a useful
continuity property, summarized in the next lemma. 
\begin{lemma}\label{lemma:continuity}
    Let $\Phi$ be a collection of repulsive potentials, and let $U \subset \C$ be bounded.
    For every bounded, measurable $\Lambda \subset \X$, the family of functions
    \[
        \{U \to \C, \lambda \mapsto Z_{\Lambda, \phi}(\lambda) \mid \phi \in \Phi\}
    \]
    is uniformly equicontinuous.
    Moreover, if $\Phi$ is closed under modification and $\delta$-zero-free on $\Lambda$ at some activity $\lambda_0 \in \C$ for some $\delta > 0$, then there exists some complex neighborhood $W$ of $\lambda_0$, such that the family of functions
    \[
        \{W \to \C, \lambda \mapsto \kappa_{\Lambda, \phi, \lambda}(x) \mid \phi \in \Phi, x \in \Lambda\}
    \]
    is uniformly equicontinuous.
\end{lemma}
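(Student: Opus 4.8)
The plan is to reduce both claims to elementary estimates on the power series defining the partition functions, exploiting that repulsiveness forces every Taylor coefficient to be nonnegative and bounded by the corresponding coefficient of $e^{|\lambda|\vol(\Lambda)}$, uniformly over the potential.

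First I would treat the partition functions. Write $Z_{\Lambda,\phi}(\lambda)=\sum_{k\ge 0}a_k(\phi)\,\lambda^k$ with $a_k(\phi)\coloneqq\tfrac1{k!}\int_{\Lambda^k}e^{-H(\V{x})}\,\vol^k(\diff\V{x})$; since $\phi$ is repulsive, $H\ge 0$, so $0\le a_k(\phi)\le \vol(\Lambda)^k/k!$ for every $\phi\in\Phi$. Fix $M$ with $U\subseteq\{\lambda\in\C:|\lambda|\le M\}$. Using $|\lambda^k-(\lambda')^k|\le kM^{k-1}|\lambda-\lambda'|$ for $\lambda,\lambda'\in U$ gives
\[
  \absolute{Z_{\Lambda,\phi}(\lambda)-Z_{\Lambda,\phi}(\lambda')}\ \le\ |\lambda-\lambda'|\sum_{k\ge 1}\frac{\vol(\Lambda)^k}{k!}\,kM^{k-1}\ =\ \vol(\Lambda)\,e^{M\vol(\Lambda)}\,|\lambda-\lambda'| .
\]
The Lipschitz constant $L_{\Lambda,M}\coloneqq\vol(\Lambda)e^{M\vol(\Lambda)}$ depends only on $\Lambda$ and $M$, so the family is uniformly equicontinuous (in fact uniformly Lipschitz). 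Note the estimate holds for \emph{every} repulsive potential on $\Lambda$, not just those in $\Phi$; I will reuse this below.

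Now assume $\Phi$ is closed under modification and $\delta$-zero-free on $\Lambda$ at $\lambda_0$, and set $M\coloneqq|\lambda_0|+1$. Choose $0<r\le 1$ small enough that $L_{\Lambda,M}\,r\le\delta/2$ and put $W\coloneqq\{\lambda:|\lambda-\lambda_0|<r\}\subseteq\{|\lambda|\le M\}$. By the first step, $\absolute{Z_{\Lambda,\phi}(\lambda)-Z_{\Lambda,\phi}(\lambda_0)}\le\delta/2$ for all $\lambda\in W$, $\phi\in\Phi$, hence $\absolute{Z_{\Lambda,\phi}(\lambda)}\ge\delta/2$ there; in particular $\kappa_{\Lambda,\phi,\lambda}(x)=\lambda\cdot Z_{\Lambda,\phi}(\lambda\mid x)/Z_{\Lambda,\phi}(\lambda)$ is well-defined on $W$ for every $\phi\in\Phi$, $x\in\Lambda$. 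For such $x$ the partial pinning $\phi(\cdot\mid x)=\phi(\cdot\mid x_{\prec\infty})$ lies in $\Phi$ by closure (cf.\ \Cref{rem:closed}) and is in any case repulsive, so the first step applies to $Z_{\Lambda,\phi}(\cdot\mid x)=Z_{\Lambda,\phi(\cdot\mid x)}(\cdot)$: it is $L_{\Lambda,M}$-Lipschitz on $W$ and bounded there by $e^{M\vol(\Lambda)}$, both uniformly in $\phi$ and $x$. Setting $f(\lambda)\coloneqq\lambda\,Z_{\Lambda,\phi}(\lambda\mid x)$ and $g(\lambda)\coloneqq Z_{\Lambda,\phi}(\lambda)$, one checks that $f$ is Lipschitz on $W$ with constant $e^{M\vol(\Lambda)}+M\,L_{\Lambda,M}$ and $|f|\le M e^{M\vol(\Lambda)}$, while $|g|\ge\delta/2$; since
\[
  \frac{f(\lambda)}{g(\lambda)}-\frac{f(\lambda')}{g(\lambda')}\ =\ \frac{\bigl(f(\lambda)-f(\lambda')\bigr)g(\lambda')+f(\lambda')\bigl(g(\lambda')-g(\lambda)\bigr)}{g(\lambda)\,g(\lambda')},
\]
this yields $\absolute{\kappa_{\Lambda,\phi,\lambda}(x)-\kappa_{\Lambda,\phi,\lambda'}(x)}\le C\,|\lambda-\lambda'|$ with an explicit $C$ depending only on $\Lambda$, $\lambda_0$, and $\delta$. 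This bound is uniform over $\phi\in\Phi$ and $x\in\Lambda$, which is the asserted uniform equicontinuity on $W$.

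I do not anticipate a genuine obstacle. The two points needing care are: (i) upgrading $\delta$-zero-freeness at the single activity $\lambda_0$ to the quantitative lower bound $\absolute{Z_{\Lambda,\phi}}\ge\delta/2$ on an honest neighborhood of $\lambda_0$ — this is precisely what the first (equicontinuity) half of the lemma supplies; and (ii) verifying that every constant stays independent of both $\phi\in\Phi$ and $x\in\Lambda$, especially the Lipschitz bound for the numerators $Z_{\Lambda,\phi}(\cdot\mid x)$, which holds because repulsiveness bounds the $k$-th Taylor coefficient by $\vol(\Lambda)^k/k!$ regardless of the potential or the pinning point.
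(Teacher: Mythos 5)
Your proposal is correct and follows essentially the same route as the paper: the identical termwise Lipschitz estimate $\absolute{Z_{\Lambda,\phi}(\lambda)-Z_{\Lambda,\phi}(\lambda')}\le \vol(\Lambda)e^{M\vol(\Lambda)}\absolute{\lambda-\lambda'}$ for the first part, and for the second part a shrinking of the neighborhood to keep $\absolute{Z_{\Lambda,\phi}}\ge\delta/2$ followed by writing $\kappa$ as $\lambda$ times a ratio of members of the first family (using closure under modification for the pinned numerator). You merely spell out the quotient estimate that the paper leaves implicit.
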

\begin{proof}
  Set $M = \sup\{\absolute{\lambda} \mid \lambda \in U\}$. Note that
  for $\lambda_1, \lambda_2 \in U$,
  $|\lambda_{1}^{k}-\lambda_{2}^{k}|\leq
  kM^{k-1}|\lambda_{1}-\lambda_{2}|$.  Hence for all $\phi \in \Phi$
  and $\lambda_1, \lambda_2 \in U$ it holds that
    \begin{equation*}
        \absolute{Z_{\Lambda, \phi}(\lambda_1) - Z_{\Lambda, \phi}(\lambda_2)} 
            \le \sum_{k \in \N} \frac{\absolute{\lambda_1^k - \lambda_2^k}}{k!} \cdot \vol(\Lambda)^k 
            \leq \absolute{\lambda_1 - \lambda_2} \cdot \vol(\Lambda) \cdot e^{M \vol(\Lambda)}.
    \end{equation*}
    This proves the first statement.  For the second part, let $U$ be
    any bounded neighborhood of $\lambda_0$.  Using the fact that
    $\Phi$ is $\delta$-zero-free at $\lambda_0$ and the first part of
    the lemma, we can find some neighborhood $W \subseteq U$ such that
    \begin{equation*}
        \{W \to \C, \lambda \mapsto Z_{\Lambda, \phi}(\lambda) \mid \phi \in \Phi\}
    \end{equation*}
    is uniformly equicontinuous and uniformly bounded from below by $\delta/2$.
    By the definition~\eqref{eq:mkptdensity} of the modified point density and the
    fact that $\Phi$ is closed under modification, $\kappa_{\Lambda,
      \phi, \lambda}(x)$ is ($\lambda$ times)  a ratio of functions in
    this last set. This shows that 
    \begin{equation*}
        \{W \to \C, \lambda \mapsto \kappa_{\Lambda, \phi, \lambda}(x) \mid \phi \in \Phi, x \in \Lambda\}
    \end{equation*}
    is uniformly equicontinuous as well.
\end{proof}

\subsubsection*{Integral Identity} The main result of this section is
the following theorem.
\begin{theorem}
  \label{thm:itegral_identity}
  Suppose $(\X, d, \vol)$ satisfies \Cref{as:continuity}.  Let
  $\Lambda \subseteq \X$ be bounded and measurable, $\lambda \in \C$
  and let $\Phi$ be a collection of symmetric and repulsive potentials
  that are closed under modification.  If $\Phi$ is $\delta$-zero-free
  for these parameters, then for all $\phi \in \Phi$ the associated
  modified point densities satisfy
  \begin{equation}
    \label{eq:itegral_identity}
    \kappa(y) = \lambda \cdot \exp \left(- \sum_{k \in \N} \frac{1}{k!} \int_{\Lambda^k} \left(1 - e^{-\phi(y, \V{w})}\right) \cdot e^{-H(\V{w} \mid y_{\prec \V{w}})} \cdot \kappa(\V{w} \mid y_{\prec \V{w}}) \vol^k(\diff \V{w})\right) .
  \end{equation}
\end{theorem}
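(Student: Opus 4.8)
The plan is to interpolate between $Z_{\Lambda,\phi}(\lambda)$ and $Z_{\Lambda,\phi}(\lambda\mid y)$ by switching on the pinning at $y$ gradually, in the order in which sub‑configurations are ranked by $D$, and to show that the resulting one–parameter family solves a linear integral equation whose solution, obtained via an integrating factor, is the exponential appearing in~\eqref{eq:itegral_identity}. We may assume $\lambda\neq 0$, since otherwise both sides of~\eqref{eq:itegral_identity} vanish. Set $F(t)\coloneqq Z_{\Lambda,\phi}(\lambda\mid y_{\prec t})$ for $t\in[0,\infty]$. Since $\Phi$ is closed under modification, $\phi(\cdot\mid y_{\prec t})\in\Phi$ for every $t$, so $\delta$‑zero‑freeness gives $|F(t)|\geq\delta$; moreover $F(0)=Z_{\Lambda,\phi}(\lambda)$ (no sub‑configuration precedes level $0$) and $F(\infty)=Z_{\Lambda,\phi}(\lambda\mid y)$ by dominated convergence (repulsiveness dominates the sum defining $F$ uniformly in $t$). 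Hence $\kappa(y)=\lambda\,F(\infty)/F(0)$, and it suffices to prove that $F(\infty)/F(0)$ equals the exponential factor in~\eqref{eq:itegral_identity}.

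The algebraic core is the identity $e^{-H(\V{x}\mid y_{\prec t})}=e^{-H(\V{x})}\prod_{\emptyset\neq S,\,D(\V{x}_S)<t}e^{-\phi(\V{x}_S,y)}$ for $\V{x}\in\X^m$, which is immediate from the definitions of $H$ and of partial pinning. By part~\ref{lemma:ordering:maximum} of \Cref{lemma:ordering}, for $\vol^m$‑almost every $\V{x}$ the numbers $\{D(\V{x}_S):\emptyset\neq S\subseteq[m]\}$ are pairwise distinct, so the right‑hand side is a step function of $t$ that acquires the factor $e^{-\phi(\V{x}_S,y)}$ exactly when $t$ crosses $D(\V{x}_S)$; at that moment every sub‑configuration preceding $\V{x}_S$ is already pinned, so the surviving Boltzmann weight is $e^{-H(\V{x}\mid y_{\prec\V{x}_S})}$. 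Telescoping over the crossings and integrating over $\V{x}\in\Lambda^m$ yields, for $s<t$,
\[
F(t)-F(s)=-\sum_{m\geq 1}\frac{\lambda^m}{m!}\int_{\Lambda^m}\sum_{\emptyset\neq S\subseteq[m]}\ind{s\leq D(\V{x}_S)<t}\,\bigl(1-e^{-\phi(\V{x}_S,y)}\bigr)\,e^{-H(\V{x}\mid y_{\prec\V{x}_S})}\,\vol^m(\diff\V{x}).
\]
For each $S$ I would then write $\V{x}=(\V{w},\V{u})$ with $\V{w}=\V{x}_S$, use symmetry of $\phi$ to replace the sum over subsets $S$ of a given size $k$ by $\binom{m}{k}$ times the term $S=[k]$, and apply the Hamiltonian splitting $H_\psi((\V{w},\V{u}))=H_\psi(\V{w})+H_\psi(\V{u}\mid\V{w})$ with $\psi\coloneqq\phi(\cdot\mid y_{\prec\V{w}})$. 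Resumming the $\V{u}$‑integral over $m$ produces $Z_{\Lambda,\psi}(\lambda\mid\V{w})$, and since $Z_{\Lambda,\psi}(\lambda)=F(D(\V{w}))$, the definition~\eqref{eq:mkptdensity} of $\kappa$ turns $\lambda^{k}e^{-H_\psi(\V{w})}Z_{\Lambda,\psi}(\lambda\mid\V{w})$ into $F(D(\V{w}))\,e^{-H(\V{w}\mid y_{\prec\V{w}})}\kappa(\V{w}\mid y_{\prec\V{w}})$. Reindexing by $m=k+\ell$, the display becomes
\[
F(t)-F(s)=-\sum_{k\geq 1}\frac{1}{k!}\int_{\Lambda^k}\ind{s\leq D(\V{w})<t}\,\beta_k(\V{w})\,F(D(\V{w}))\,\vol^k(\diff\V{w}),\qquad \beta_k(\V{w})\coloneqq\bigl(1-e^{-\phi(y,\V{w})}\bigr)e^{-H(\V{w}\mid y_{\prec\V{w}})}\kappa(\V{w}\mid y_{\prec\V{w}}),
\]
where $\beta_k$ is supported on $\Lambda^k$ and bounded, with $\|\beta_k\|_\infty\leq|\lambda|^k e^{|\lambda|\vol(\Lambda)}/\delta$ (the $\delta$ coming again from $\delta$‑zero‑freeness of the pinned potentials).

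To conclude, I would disintegrate $\vol^k$ along the level sets of $D$ via \Cref{lem:as:continuity}: on $\vol^k_\tau$ one has $D(\V{w})=\tau$, so the last display reads $F(t)-F(s)=-\int_s^t b(\tau)\,F(\tau)\,\diff\tau$ with $b(\tau)\coloneqq\sum_{k\geq 1}\frac{1}{k!}g_k(\tau)\int_{\X^k}\beta_k(\V{w})\,\vol^k_\tau(\diff\V{w})$, and $b\in L^1(\R_{\geq0})$ since $\int_0^\infty|b|\leq\sum_{k\geq1}\frac{1}{k!}\int_{\Lambda^k}|\beta_k|\,\vol^k(\diff\V{w})\leq\delta^{-1}e^{2|\lambda|\vol(\Lambda)}<\infty$. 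Thus $F$ is absolutely continuous and satisfies $F(t)=F(0)-\int_0^t bF$; since $|F|\geq\delta$ is nonvanishing, the integrating factor $e^{\int_0^t b}$ gives $F(t)=F(0)e^{-\int_0^t b}$, hence $F(\infty)/F(0)=e^{-\int_0^\infty b}$. Finally \Cref{lem:as:continuity} gives $\int_0^\infty b=\sum_{k\geq1}\frac{1}{k!}\int_{\X^k}\beta_k\,\vol^k=\sum_{k\geq1}\frac{1}{k!}\int_{\Lambda^k}(1-e^{-\phi(y,\V{w})})e^{-H(\V{w}\mid y_{\prec\V{w}})}\kappa(\V{w}\mid y_{\prec\V{w}})\,\vol^k(\diff\V{w})$, and multiplying $F(\infty)/F(0)$ by $\lambda$ yields~\eqref{eq:itegral_identity}.

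The step I expect to be the main obstacle is making the telescoping rigorous, i.e.\ showing that $F$ is genuinely absolutely continuous rather than merely of bounded variation: this rests on part~\ref{lemma:ordering:continuity} of \Cref{lemma:ordering}, which lets one bound $\vol^m\{\V{x}\in\Lambda^m:\,s\leq D(\V{x}_S)<t\}$ by $\int_s^t(\cdot)$ for an integrable density, summed over the finitely many $S$ and then over $m$ (convergent by the repulsive bound $\sum_{m}\frac{|\lambda|^m}{m!}(2^m-1)\vol(\Lambda)^m<\infty$). Related but more routine tasks are checking enough joint measurability — of $\V{w}\mapsto e^{-H(\V{w}\mid y_{\prec\V{w}})}$ and $\V{w}\mapsto\kappa(\V{w}\mid y_{\prec\V{w}})$, which depend on $\V{w}$ only through $\V{w}$ and $D(\V{w})$ — to apply Fubini and the disintegration theorem, and justifying the interchanges of the $m$‑sum with the integrals and the limit $t\to\infty$; all of these follow from the uniform bounds furnished by repulsiveness and $\delta$‑zero‑freeness.
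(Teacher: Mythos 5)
Your proposal is correct and follows essentially the same route as the paper: the same partial-pinning interpolation $F(t)=Z_{\Lambda,\phi}(\lambda\mid y_{\prec t})$, the same telescoping over the $D$-ordered subsets (the paper's \Cref{lem:orderproduct}), the same resummation of the complementary coordinates into $Z_{\Lambda,\phi}(\lambda\mid y_{\prec \V{w}};\V{w})$, and the same disintegration along level sets of $D$. The only variation is the final step, where you solve the linear relation $F(t)=F(0)-\int_0^t b\,F$ by an integrating factor with the a priori integrable coefficient $b$, whereas the paper exponentiates $\int_0^t F'/F$ via its \Cref{lemma:ftc}; your variant sidesteps the branch-of-logarithm issue that forces the paper's appendix argument, at the small cost of checking that $F(t)\,e^{\int_0^t b}$ is absolutely continuous (true, as a product of bounded absolutely continuous functions on compact intervals).
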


To prove \Cref{thm:itegral_identity} we need three
lemmas.  The first lemma is a fundamental theorem of calculus. We were unable
to find a reference for this (surely well-known) result, and hence
provide a proof in \Cref{sec:lemftc}.
\begin{lemma}
  \label{lemma:ftc} Let $t > 0$, and let
  $f\colon [0, t] \to \C \setminus \{0\}$ be absolutely continuous.  It holds
  that
  \begin{equation*}
    \frac{f(t)}{f(0)} = \exp\left(\int_{0}^{t} \frac{f'(s)}{f(s)} \diff s \right),
  \end{equation*}
  where the integral should be understood as a Lebesgue integral and $f'$ is almost everywhere a derivative of $f$.
\end{lemma}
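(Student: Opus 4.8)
The plan is to exhibit a candidate for the right-hand side, show it is absolutely continuous, and identify it with $f(t)/f(0)$ via a vanishing-derivative argument. First I would record why $f'/f$ is integrable: absolute continuity on the compact interval $[0,t]$ makes $f$ continuous, so $|f|$ attains a positive minimum $\varepsilon$ (using that $f$ is nowhere zero), while absolute continuity also gives $f' \in L^1[0,t]$ together with $f(s) = f(0) + \int_0^s f'$. Hence $f'/f \in L^1[0,t]$, and $G(s) \coloneqq \int_0^s f'(u)/f(u)\,\diff u$ is absolutely continuous with $G(0) = 0$ and $G' = f'/f$ almost everywhere.

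Next I would set $h \coloneqq \exp \circ G$. Since $G$ is continuous on $[0,t]$ its image lies in a bounded disk on which $\exp$ is Lipschitz, so $h$ is absolutely continuous; as $\exp$ is entire, the chain rule gives $h' = (f'/f)\,h$ at every point where $G$ is differentiable, i.e.\ almost everywhere. Moreover $h(0) = 1$, and $h$ is continuous and nowhere zero, hence $|h|$ has a positive lower bound on $[0,t]$. I would then consider $q \coloneqq f/h$. Because $z \mapsto 1/z$ is Lipschitz on $\{z : |z| \ge \inf_{[0,t]} |h|\}$, the function $1/h$ is absolutely continuous, and since $f$ and $1/h$ are bounded their product $q$ is absolutely continuous; the quotient rule then yields, almost everywhere,
\[
  q' = \frac{f'h - f h'}{h^2} = \frac{f'h - f\,(f'/f)\,h}{h^2} = 0 .
\]
An absolutely continuous function with almost-everywhere zero derivative is constant, so $q \equiv q(0) = f(0)/h(0) = f(0)$; equivalently $f(s) = f(0)\exp\!\left(\int_0^s f'(u)/f(u)\,\diff u\right)$ for all $s \in [0,t]$, and setting $s = t$ gives the lemma.

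The genuinely nontrivial content is entirely in the standard measure-theoretic facts I am invoking for complex-valued absolutely continuous functions: that composing such a function with an entire (or locally Lipschitz) map preserves absolute continuity and obeys the chain rule almost everywhere, that products and reciprocals bounded away from zero of absolutely continuous functions are absolutely continuous with the usual product/quotient rule, and that an a.e.-vanishing derivative forces constancy. Each reduces to the corresponding classical real-variable theorem applied to real and imaginary parts, so I expect the main obstacle to be expository precision rather than any real difficulty --- stating these facts cleanly and in the right generality is the crux of writing the proof.
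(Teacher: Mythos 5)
Your argument is correct, but it takes a genuinely different route from the paper. The paper's proof goes through complex line integrals: it defines the path $\gamma(s)=f(s)/f(0)$, proves (by polygonal approximation of rectifiable paths) that $\exp\bigl(\int_\gamma z^{-1}\,\diff z\bigr)=\gamma(t)/\gamma(0)$ on any connected open subset of $\C\setminus\{0\}$, and separately shows that line integrals over absolutely continuous paths reduce to the Lebesgue integral $\int_0^t f'(s)/f(s)\,\diff s$. You instead stay entirely within real-variable measure theory: you build the candidate $h=\exp\bigl(\int_0^{\cdot}f'/f\bigr)$ directly, verify it is absolutely continuous with $h'=(f'/f)h$ a.e.\ (using that $\exp$ is Lipschitz on the bounded image of the continuous integral, and the pointwise chain rule at points of differentiability), and then show $f/h$ is absolutely continuous with a.e.\ vanishing derivative, hence constant. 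Both proofs must confront the fact that no continuous branch of $\log f$ need exist; the paper handles this via the polygonal-approximation lemma for $\int_\gamma z^{-1}\,\diff z$, while you sidestep logarithms altogether by exponentiating first and comparing quotients. Your route is arguably more elementary and self-contained --- it needs only standard facts about absolutely continuous functions (stability under Lipschitz composition and bounded products, a.e.\ product and chain rules, and constancy from an a.e.\ zero derivative), each of which reduces to the classical real case applied to real and imaginary parts --- whereas the paper's route develops two reusable lemmas about line integration over rectifiable paths. The steps you flag as the crux (the measure-theoretic facts for complex-valued AC functions) are indeed all true and standard, so there is no gap; only the expository care you anticipate is required.
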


The second lemma is elementary.
\begin{lemma}
  \label{lem:orderproduct}
  Suppose $A$ is a totally ordered finite set with total order $<$,
  and let $\{x_{a}\}_{a\in A}$ be indeterminates. Then
  \begin{equation}
    \label{eq:orderproduct}
    \prod_{a\in A}(1+x_{a}) = 1+\sum_{b\in A}x_{b}\prod_{a<b}(1+x_{a}).
  \end{equation}
\end{lemma}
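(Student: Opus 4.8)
The plan is to prove the identity by induction on the cardinality of the totally ordered set $A$, peeling off the maximal element. Since $A$ is finite and totally ordered, it has a well-defined maximum; let me call it $m$, and write $A' \coloneqq A \setminus \{m\}$, which inherits a total order from $A$. The base case $A = \emptyset$ reads $1 = 1$ (the empty product is $1$ and the empty sum is $0$), which is immediate, and the case $|A| = 1$ reads $1 + x_m = 1 + x_m \cdot 1$, also immediate.

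For the inductive step, I would start from the left-hand side and isolate the factor indexed by $m$:
\begin{equation*}
  \prod_{a \in A} (1 + x_a) = (1 + x_m) \prod_{a \in A'} (1 + x_a) = \prod_{a \in A'} (1 + x_a) + x_m \prod_{a \in A'} (1 + x_a).
\end{equation*}
Now apply the induction hypothesis to the first summand, rewriting $\prod_{a \in A'}(1 + x_a) = 1 + \sum_{b \in A'} x_b \prod_{a < b} (1 + x_a)$, where the inner product ranges over $a \in A'$ with $a < b$; but since $b \in A'$ and $m$ is the maximum of $A$, the condition $a < b$ with $a \in A$ is the same as $a < b$ with $a \in A'$, so this inner product is unambiguous. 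The second summand $x_m \prod_{a \in A'}(1 + x_a)$ is exactly the $b = m$ term of the sum on the right-hand side of~\eqref{eq:orderproduct}, again because $\{a \in A : a < m\} = A'$. Combining, the right-hand side of the displayed equation becomes $1 + \sum_{b \in A'} x_b \prod_{a < b}(1+x_a) + x_m \prod_{a < m}(1 + x_a) = 1 + \sum_{b \in A} x_b \prod_{a < b}(1 + x_a)$, which is precisely the claimed right-hand side.

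There is no real obstacle here; the only point requiring a small amount of care is bookkeeping the ranges of the products, namely checking that when $b \neq m$ the set $\{a \in A : a < b\}$ coincides with $\{a \in A' : a < b\}$ (true because $m$ is maximal and hence never satisfies $m < b$), so that the induction hypothesis applies verbatim. Alternatively, one could give a direct combinatorial proof: expanding $\prod_{a \in A}(1 + x_a)$ as a sum over subsets $B \subseteq A$ of $\prod_{b \in B} x_b$, and grouping nonempty subsets $B$ according to their maximal element $b$, one sees that the subsets with maximum $b$ are exactly $\{b\} \cup B'$ with $B' \subseteq \{a : a < b\}$, so summing $\prod$ over such $B'$ gives $x_b \prod_{a < b}(1 + x_a)$; adding back the empty subset's contribution of $1$ yields~\eqref{eq:orderproduct}. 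I would likely present whichever of these is shorter, probably the induction.
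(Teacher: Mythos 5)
Your proof is correct. The paper's own argument is the two-line combinatorial one you mention as an alternative at the end: expand $\prod_{a\in A}(1+x_a)$ over subsets $B\subseteq A$ (``apply the binomial theorem'') and group the nonempty subsets according to their maximal element $b$, which immediately gives $x_b\prod_{a<b}(1+x_a)$ for each group. Your primary induction on $|A|$, peeling off the maximum, is an equivalent reformulation of the same idea (the inductive step is precisely the $b=\max A$ versus $b<\max A$ split), and your bookkeeping of the product ranges is handled correctly; either presentation is fine, with the subset-grouping version being marginally shorter.
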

\begin{proof}
  Apply the binomial theorem. Collect terms according to the largest
  $b$ such that $x_{b}$ occurs. 
\end{proof}

\begin{lemma} \label{lemma:partition_function_identity}
    Consider the setting of \Cref{thm:itegral_identity}.
    For all $t \in [0, \infty]$ it holds that
    \begin{equation*}
        Z_{\Lambda, \phi}(\lambda \mid y_{\prec t}) = Z_{\Lambda,
          \phi}(\lambda) - \sum_{k \in \N} \frac{\lambda^k}{k!}
        \int_{\Lambda^k} \ind{\V{w} \prec t} \cdot \left( 1 -
          e^{-\phi(y, \V{w})} \right) \cdot e^{-H(\V{w} \mid y_{\prec
            \V{w}})} Z_{\Lambda, \phi}(\lambda \mid y_{\prec \V{w}};
        \V{w}) \vol^k(\diff \V{w}). 
    \end{equation*}
\end{lemma}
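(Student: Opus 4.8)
The plan is to prove \Cref{lemma:partition_function_identity} by expanding the partition function $Z_{\Lambda, \phi}(\lambda \mid y_{\prec t})$ in its defining power series and performing a combinatorial resummation that peels off, from each configuration $\V{w}$, the ``first'' point (with respect to the order induced by $D$) whose addition triggers the pinning interaction with $y$. Concretely, for a fixed $\V{w} = (w_1, \dots, w_k) \in \Lambda^k$, the Hamiltonian $H(\V{w} \mid y_{\prec t})$ differs from $H(\V{w})$ only through the extra terms $\phi(y, \V{w}_S)$ summed over subsets $S$ with $\V{w}_S \prec t$. Writing $e^{-H(\V{w} \mid y_{\prec t})} = e^{-H(\V{w})} \prod_{S \,:\, \V{w}_S \prec t} e^{-\phi(y, \V{w}_S)}$ and inserting $e^{-\phi(y,\V{w}_S)} = 1 + (e^{-\phi(y,\V{w}_S)} - 1)$, I would like to apply \Cref{lem:orderproduct} to the product over subsets $S$, using that by part~\ref{lemma:ordering:maximum} of \Cref{lemma:ordering} the quantity $D$ induces (for $\vol^k$-almost every $\V{w}$) a strict total order on the subsets of $[k]$.

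Carrying this out, \Cref{lem:orderproduct} with $A = \{S \subseteq [k] : \V{w}_S \prec t\}$ and $x_S = e^{-\phi(y,\V{w}_S)} - 1$ rewrites $\prod_{S \in A} e^{-\phi(y,\V{w}_S)}$ as $1 + \sum_{S \in A} (e^{-\phi(y,\V{w}_S)} - 1) \prod_{T < S, T \in A} e^{-\phi(y,\V{w}_T)}$. The leading $1$ reproduces $e^{-H(\V{w})}$ and hence, after summing over $k$ and integrating, gives back $Z_{\Lambda,\phi}(\lambda)$. For the remaining sum, I would fix the distinguished subset $S$ playing the role of $b$; by symmetry of $\phi$ and of $D_k$ (part~\ref{lemma:ordering:measurability}), I can relabel so that $S$ is an initial segment, say $S$ has size $j$ and corresponds to coordinates forming a sub-tuple $\V{w}'$, while the complementary coordinates are integrated freely. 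The condition ``$T < S$ and $\V{w}_T \prec t$'' becomes ``$\V{w}_T \prec \V{w}'$'' once we note $\V{w}' \prec t$, so $\prod_{T < S} e^{-\phi(y,\V{w}_T)}$ over subsets $T$ of the $S$-block together with the free integration over the complementary block reassembles exactly $e^{-H(\V{w}' \mid y_{\prec \V{w}'})} \cdot Z_{\Lambda,\phi}(\lambda \mid y_{\prec \V{w}'}; \V{w}')$ — here the outer potential for the remaining particles is $\phi(\cdot \mid y_{\prec \V{w}'})$, pinned further at $\V{w}'$, which is precisely the triple-modified object in the statement. The combinatorial factor: relabeling $S$ among $[k]$ contributes $\binom{k}{j}$, and after reindexing $k = j + \ell$ the factorials combine as $\frac{1}{k!}\binom{k}{j} = \frac{1}{j!\,\ell!}$, with the $\ell$-fold integral producing $Z_{\Lambda,\phi}(\lambda \mid y_{\prec \V{w}'}; \V{w}')$ and the surviving $\frac{\lambda^j}{j!}\int_{\Lambda^j}$ matching the claimed formula (with $j$ renamed $k$); the indicator $\ind{\V{w}' \prec t}$ records the constraint $S \in A$.

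I would carry out the steps in this order: (i) expand $Z_{\Lambda,\phi}(\lambda \mid y_{\prec t})$, factor $e^{-H(\cdot \mid y_{\prec t})}$ into $e^{-H}$ times the product over triggered subsets; (ii) invoke \Cref{lemma:ordering}\ref{lemma:ordering:maximum} to get a total order on subsets for a.e.\ $\V{w}$ and apply \Cref{lem:orderproduct}; (iii) split off the $1$ term to recover $Z_{\Lambda,\phi}(\lambda)$; (iv) in the remaining sum, use symmetry to fix the distinguished subset as a block, reindex $k = j+\ell$, and recognize the inner $\ell$-sum as $e^{-H(\V{w}'\mid y_{\prec\V{w}'})} Z_{\Lambda,\phi}(\lambda \mid y_{\prec \V{w}'}; \V{w}')$; (v) bookkeep the combinatorial and activity factors and the $\prec t$ indicator.

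The main obstacle I anticipate is the bookkeeping in step (iv): correctly identifying that the partial pinning ``$T \prec S$'' on sub-subsets of the distinguished block, combined with free integration over the complementary block, reassembles into the nested object $Z_{\Lambda,\phi}(\lambda \mid y_{\prec \V{w}'}; \V{w}')$ rather than something with the wrong pinning structure — one must check that for a point $w$ in the complementary block, the interaction $\phi(y, \V{w}_T)$ is retained exactly when $\V{w}_T \prec \V{w}'$, i.e.\ that the partial pinning $y_{\prec \V{w}'}$ is applied consistently at the deeper level. A secondary technical point is justifying the interchange of the (absolutely convergent, by repulsiveness and local finiteness) sum over $k$ with the finite combinatorial manipulation, and handling the measure-zero set of $\V{w}$ where $D$ fails to totally order the subsets, which is harmless since everything is integrated against $\vol^k$.
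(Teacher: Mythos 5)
Your proposal is correct and follows essentially the same route as the paper's proof: factor $e^{-H(\cdot\mid y_{\prec t})}$ over non-empty subsets, apply \Cref{lem:orderproduct} using the almost-sure total order from part~\ref{lemma:ordering:maximum} of \Cref{lemma:ordering}, split off the distinguished subset by symmetry with the binomial factor, reindex $m=k+\ell$, and reassemble the inner sum via $H(\V{w},\V{x}\mid y_{\prec\V{w}})=H(\V{w}\mid y_{\prec\V{w}})+H(\V{x}\mid y_{\prec\V{w}};\V{w})$. The one imprecision (the subsets $T$ with $\V{w}_T\prec\V{w}_S$ need not lie inside the distinguished block) is exactly the point you flag and resolve in your final paragraph, so the argument goes through as in the paper.
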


\begin{proof}
    Recall that
    \begin{equation*}
        Z_{\Lambda, \phi}(\lambda \mid y_{\prec t}) = 1 + \sum_{m \in \N} \frac{\lambda^m}{m!} \int_{\Lambda^m} e^{-H(\V{x} \mid y_{\prec t})} \vol^m(\diff \V{x}) .
      \end{equation*}
    Moreover, note that for all $m \in \N$ and $\V{x} \in \Lambda^m$ it holds that
    \begin{equation*}
      e^{-H(\V{x} \mid y_{\prec t})} 
        = e^{-H(\V{x})} \prod_{S \subseteq [m]: S \neq \emptyset} \left( 1 - \ind{\V{x}_S \prec t} \cdot \left(1 - e^{- \phi(y, \V{x}_S)}\right)\right).
      \end{equation*}
      By part~\ref{lemma:ordering:maximum} of  \Cref{lemma:ordering},
      for $\vol^m$-almost all $\V{x}$, the set of $S\subset [m]$ are
      totally ordered according to whether or not $\V{x}_{S}\prec
      \V{x}_{S'}$. Hence by applying \Cref{lem:orderproduct},
      \begin{align*}
        e^{-H(\V{x}\mid y_{\prec t})}
        &= e^{-H(\V{x})} -  e^{-H(\V{x})}\sum_{B\subset [m]: B\neq
          \emptyset}
          \ind{\V{x}_{B}\prec t}(1-e^{-\phi(y,\V{x}_{B})})
          \prod_{S\neq \emptyset: \V{x}_{S}\prec \V{x}_{B}}
          (1-\ind{\V{x}_{S}\prec  t}(1-e^{-\phi(y,\V{x}_{S})})) \\
        &= e^{-H(\V{x})}
          -  e^{-H(\V{x})}\sum_{B\subset [m]: B\neq \emptyset}
          \ind{\V{x}_{B}\prec t}(1-e^{-\phi(y,\V{x}_{B})})
          \prod_{S\neq \emptyset: \V{x}_{S}\prec \V{x}_{B}} e^{-\phi(y,\V{x}_{S})} \\
        &= e^{-H(\V{x})} - \sum_{B\subset [m]:B\neq\emptyset}
          \ind{\V{x}\prec  t}(1-e^{-\phi(y,\V{x}_{B})})e^{-H(\V{x}|y_{\prec \V{x}_{B}})}
     \end{align*} 
     by the definition of $H(\V{x}| y_{\prec \V{x}_{B}})$. 
    Substituting this back into the definition of $Z_{\Lambda, \phi}(\lambda \mid y_{\prec t})$ yields
    \begin{equation*}
        Z_{\Lambda, \phi}(\lambda \mid y_{\prec t}) = Z_{\Lambda, \phi}(\lambda) - \sum_{m \in \N} \frac{\lambda^m}{m!} \sum_{\substack{B \subseteq [m]:\\ B \neq \emptyset}} \int_{\Lambda^m} \ind{\V{x}_B \prec t} \cdot \left(1 - e^{-\phi(y, \V{x}_B)}\right) \cdot e^{-H\left(\V{x} \mid y_{\prec \V{x}_{B}}\right)} \vol^{m}(\diff \V{x}) .
      \end{equation*}
      The value of the integral only depends on the cardinality of
      $B$, and hence the last term above can be rewritten as
      \begin{equation*}
        \sum_{m \in \N} \frac{\lambda^m}{m!} \sum_{k=1}^{m} \binom{m}{k} \int_{\Lambda^k} \ind{\V{w} \prec t} \cdot \left(1 - e^{-\phi(y, \V{w})}\right) \int_{\Lambda^{m-k}} e^{-H(\V{w}, \V{x} \mid y_{ \prec \V{w}})} \vol^{m - k}(\diff \V{x}) \vol^{k}(\diff \V{w}) .
      \end{equation*}
    Changing the order of summation, setting $\ell = m - k$, and
    exchanging summation and integration rewrites this quantity as
    \begin{align*}
            \sum_{k \in \N} \frac{\lambda^k}{k!} \int_{\Lambda^k} \ind{\V{w} \prec t} \cdot \left(1 - e^{-\phi(y, \V{w})}\right) \sum_{\ell \ge 0} \frac{\lambda^{\ell}}{\ell!} \int_{\Lambda^{\ell}} e^{-H(\V{w}, \V{x} \mid y_{\prec \V{w}})} \vol^{\ell}(\diff \V{x}) \vol^{k}(\diff \V{w}),
    \end{align*}
    and thus
    \begin{align*}
        Z_{\Lambda, \phi}(\lambda \mid y_{\prec t}) = Z_{\Lambda, \phi}(\lambda) - \sum_{k \in \N} \frac{\lambda^k}{k!} \int_{\Lambda^k} \ind{\V{w} \prec t} \cdot \left(1 - e^{-\phi(y, \V{w})}\right) \sum_{\ell \ge 0} \frac{\lambda^{\ell}}{\ell!} \int_{\Lambda^{\ell}} e^{-H(\V{w}, \V{x} \mid y_{\prec \V{w}})} \vol^{\ell}(\diff \V{x}) \vol^{k}(\diff \V{w}).
    \end{align*}
    The exchanges summation and integration above can be justified
    by using that $\phi$ is repulsive.
    Finally, observing that
    \[
        H(\V{w}, \V{x} \mid y_{\prec \V{w}}) = H(\V{w} \mid y_{\prec \V{w}}) + H(\V{x} \mid y_{\prec \V{w}}; \V{w})
    \]
    yields
    \[
        \sum_{\ell \ge 0} \frac{\lambda^{\ell}}{\ell!} \int_{\Lambda^{\ell}} e^{- H(\V{w}, \V{x} \mid y_{\prec \V{w}})} \vol^{\ell}(\diff \V{x}) = e^{- H(\V{w}\mid y_{\prec \V{w}})} \cdot Z_{\Lambda,\phi}(\lambda \mid y_{\prec \V{w}}; \V{w}),
    \]
    which concludes the proof.
  \end{proof}

\begin{proof}[Proof of \Cref{thm:itegral_identity}]
  Our first step is to obtain an expression for the derivative
  $\frac{\partial Z_{\Lambda, \phi}(\lambda \mid y_{\prec
      t})}{\partial t}$ on $\R_{\ge 0}$. We will do this by
  manipulating the formula obtained in
  \Cref{lemma:partition_function_identity}. By applying
  \Cref{lem:as:continuity} for each $k$, 
  \begin{align*}
    &\sum_{k \in \N} \frac{\lambda^k}{k!} \int_{\Lambda^k} \ind{\V{w} \prec t} \cdot \left( 1 - e^{-\phi(y, \V{w})} \right) \cdot e^{-H(\V{w} \mid y_{\prec \V{w}})} Z_{\Lambda, \phi}(\lambda \mid y_{\prec \V{w}}; \V{w}) \vol^k(\diff \V{w})  \\
    &\hspace{3em}= \int_{0}^{t} \sum_{k \in \N} \frac{\lambda^k}{k!} \int_{\Lambda^k} \ind{\V{w} \prec t} \cdot \left( 1 - e^{-\phi(y, \V{w})} \right) \cdot e^{-H(\V{w} \mid y_{\prec \V{w}})} Z_{\Lambda, \phi}(\lambda \mid y_{\prec \V{w}}; \V{w}) \vol_s^k(\diff \V{w})  g_k(s) \diff s .
  \end{align*}
  The exchange of summation and integration is allowed by absolute
  convergence since the potentials are repulsive (see discussion
  after~\eqref{eq:PF}). 
  Substituting this back into \Cref{lemma:partition_function_identity}
  and applying the fundamental theorem of calculus for Lebesgue-integrable
  functions yields that
  $t \mapsto Z_{\Lambda, \phi}(\lambda \mid y_{\prec t})$ is
  absolutely continuous on bounded intervals in $\R_{\ge 0}$ and
  almost-everywhere differentiable, with
  \begin{align}
    \frac{\partial Z_{\Lambda, \phi}(\lambda \mid y_{\prec t})}{\partial t} = - \sum_{k \in \N} \frac{\lambda^k}{k!} \int_{\Lambda^k} \left( 1 - e^{-\phi(y, \V{w})} \right) \cdot e^{-H(\V{w} \mid y_{\prec \V{w}})} \cdot Z_{\Lambda, \phi}(\lambda \mid y_{\prec \V{w}}; \V{w}) \vol_t^k(\diff \V{w}) g_k(t). \label{eq:derivative_partition_function}
  \end{align}

  Observe that $\phi(\cdot \mid y_{\prec 0}) = \phi$ and
  $\phi(\cdot \mid y_{\prec \infty}) = \phi(\cdot \mid y)$.  Moreover,
  it holds that
  $Z_{\Lambda, \phi}(\lambda \mid y_{\prec t}) \to Z_{\Lambda,
    \phi}(\lambda \mid y_{\prec \infty})$ as $t \to \infty$.  Hence,
  by continuity of the exponential function and dominated convergence,
  the conclusion of the lemma follows if, for every $t \in \R_{>0}$,
  \begin{equation*}
    \frac{Z_{\Lambda, \phi}(\lambda \mid y_{\prec t})}{Z_{\Lambda,
        \phi}(\lambda \mid y_{\prec 0})} = \exp \left(- \sum_{k \in
        \N} \frac{1}{k!} \int_{\Lambda^k} \ind{\V{w} \prec t} \left(1
        - e^{-\phi(y, \V{w})}\right) \cdot e^{-H(\V{w} \mid y_{\prec
          \V{w}})} \cdot \kappa(\V{w} \mid y_{\prec \V{w}})
      \vol^k(\diff \V{w})\right).
  \end{equation*}
  Towards this, observe that since
  $s \mapsto Z_{\Lambda, \phi}(\lambda \mid y_{\prec s})$ is non-zero
  and absolutely continuous on bounded intervals, \Cref{lemma:ftc}
  yields
  \begin{equation*}
    \frac{Z_{\Lambda, \phi}(\lambda \mid y_{\prec t})}{Z_{\Lambda,
        \phi}(\lambda \mid y_{\prec 0})}
    = \exp\left( - \int_0^t \frac{1}{Z_{\Lambda, \phi}(\lambda \mid y_{\prec s})} \cdot \frac{\partial Z_{\Lambda, \phi}(\lambda \mid y_{\prec s})}{\partial s} \diff s\right) .
  \end{equation*}
  Moreover, by \Cref{eq:derivative_partition_function} and the fact
  that $\vol_s^k (\diff \V{w}) g_k(s) \diff s = \vol^k(\diff \V{w})$
  and
  $Z_{\Lambda, \phi}(\lambda \mid y_{\prec s}) = Z_{\Lambda,
    \phi}(\lambda \mid y_{\prec \V{w}})$ for $\vol_s^k$-almost all
  $\V{w} \in \Lambda^k$, we get
  \begin{align*}
    & \int_0^t \frac{1}{Z_{\Lambda, \phi}(\lambda \mid y_{\prec s})} \cdot \frac{\partial Z_{\Lambda, \phi}(\lambda \mid y_{\prec s})}{\partial s} \diff s \\
    &\hspace{1em} = 
      \int_0^{t} \sum_{k \in \N} \frac{\lambda^k}{k!} \int_{\Lambda^k} \left( 1 - e^{-\phi(y, \V{w})} \right) \cdot e^{-H(\V{w} \mid y_{\prec \V{w}})} \cdot \frac{Z_{\Lambda, \phi}(\lambda\mid y_{\prec \V{w}}; \V{w})}{Z_{\Lambda, \phi}(\lambda \mid y_{\prec s})} \vol^k_s(\diff \V{w}) g_k(s) \diff s\\
    &\hspace{1em}= \sum_{k \in \N} \frac{1}{k!} \int_{\Lambda^k} \ind{\V{w} \prec t} \left( 1 - e^{-\phi(y, \V{w})} \right) \cdot e^{-H(\V{w} \mid y_{\prec \V{w}})} \cdot \kappa(\V{w} \mid y_{\prec \V{w}}) \vol^k(\diff \V{w}) ,
  \end{align*}
  where the exchange of integration and summation is justified by
  dominated convergence, using the fact that
  $\V{w} \mapsto \kappa(\V{w} \mid y_{\prec \V{w}})$ is bounded given
  that $\Phi$ is closed under modification and $\delta$-zero-free on
  $\Lambda$ at activity $\lambda$.
\end{proof}

\begin{remark}
  \label{rem:connected_parts}
  One approach to proving zero-freeness for partition functions is to
  derive a ``connected parts'' identity that shows $\log Z$ can be
  expressed as a sum over terms indexed by trees (for two-body
  potentials) or generalized tree-like objects~\cite{brydges1986}. We
  view~\eqref{eq:itegral_identity} as a variation on this theme:
  $\kappa(y)$ is a derivative of $\log Z$, and our formula locally
  explores the tree-like structure of the particle at $y$. Our choice
  of $D$ in~\eqref{eq:Dchoice} plays the role of the choice of a
  partition scheme, see, e.g.,~\cite{UeltschiOWR}. By avoiding explicit sums over
  tree-like objects we avoid convergence issues related to their
  enumeration. 
\end{remark}

The following lemma has a proof similar to that of
\Cref{thm:itegral_identity}; the conjunction of the lemmas is what
will allow us to establish zero-free disks for partition functions.
\begin{lemma}
  \label{lemma:partitionfunctionFromDensity}
  Suppose $(\X, d, \vol)$ satisfies \Cref{as:continuity}, and let
  $z\in \X$.\footnote{This $z$ does not need to be the same as the one
    used in the definition~\eqref{eq:Dchoice} of $D$.}  Let
  $\Lambda \subset \X$ be bounded and measurable, $\lambda \in \C$ and
  let $\Phi$ be a collection of repulsive potential that is closed
  under modification.  For every $\phi \in \Phi$ and $x \in \X$,
  define the potential $\phi^{(x)}$ with $\phi^{(x)}_k = \phi_k$ for
  $k \ge 2$ and
    \begin{equation*}
        \phi^{(x)}_1(y) \coloneqq \begin{cases}
            \infty &\text{ if } d(z, y) < d(z, x) \\
            \phi_1(y) &\text{ otherwise}.
        \end{cases}
    \end{equation*}
    Note that $\phi^{(x)} \in \Phi$.  If $\Phi$ is $\delta$-zero-free
    on $\Lambda$ at activity $\lambda \in \C$ for some $\delta > 0$,
    then, for all $\phi \in \Phi$, it holds that
    \begin{equation*}
        \log(Z_{\Lambda, \phi}(\lambda)) = \int_{\Lambda} e^{-\phi^{(x)}(x)} \kappa^{(x)}(x) \vol(\diff x),
    \end{equation*}
    where $\kappa^{(x)}$ is the modified point density associated with $\phi^{(x)}$ on $\Lambda$ at activity $\lambda$.
\end{lemma}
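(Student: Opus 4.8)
The plan is to follow the template of the proof of \Cref{thm:itegral_identity}: produce a one-parameter family of partition functions in which the parameter eventually appears only inside an indicator function, disintegrate and apply the fundamental theorem of calculus for Lebesgue integrals to differentiate, and then integrate the logarithmic derivative using \Cref{lemma:ftc}. For $r \ge 0$ set $\Lambda_r \coloneqq \{y \in \Lambda : d(z,y) \ge r\}$ and $F(r) \coloneqq Z_{\Lambda_r, \phi}(\lambda)$. Since the $\infty$ one-body term of $\phi^{(x)}$ on $\{y : d(z,y) < d(z,x)\}$ simply deletes those points from the integration domain, $F(r) = Z_{\Lambda, \phi^{(x)}}(\lambda)$ for any $x$ with $d(z,x) = r$; as $\phi^{(x)} \in \Phi$ and $\Phi$ is $\delta$-zero-free, $|F(r)| \ge \delta$ for all $r$. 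Also $F(0) = Z_{\Lambda, \phi}(\lambda)$, and $F(r) = 1$ once $r > R_\Lambda \coloneqq \sup_{y \in \Lambda} d(z,y) < \infty$, since then every $k \ge 1$ term vanishes. Thus it suffices to compute $\log(F(0)/F(R_\Lambda + 1)) = \log Z_{\Lambda, \phi}(\lambda)$ via \Cref{lemma:ftc}.

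First I would establish a partition-function identity analogous to \Cref{lemma:partition_function_identity}. Writing $m_k(\V{x}) \coloneqq \min_{i \in [k]} d(z, x_i)$, the difference $F(r) - F(r')$ for $0 \le r \le r'$ carries $e^{-H(\V{x})}$ with the factor $\prod_{i \in [k]} \ind{d(z,x_i) \ge r} - \prod_{i \in [k]} \ind{d(z,x_i) \ge r'} = \ind{r \le m_k(\V{x}) < r'}$ (here \Cref{lem:orderproduct} is not even needed, as a product of indicators is the indicator of the minimum), so
\[
  F(r) - F(r') = \sum_{k \in \N} \frac{\lambda^k}{k!} \int_{\Lambda^k} e^{-H(\V{x})}\, \ind{r \le m_k(\V{x}) < r'}\, \vol^k(\diff \V{x}),
\]
the interchanges being justified by absolute convergence since $\phi$ is repulsive. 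By \Cref{as:continuity} the values $d(z,x_1), \dots, d(z,x_k)$ are $\vol^k$-almost surely distinct (a special case of part~\ref{lemma:ordering:maximum} of \Cref{lemma:ordering}), so $\ind{r \le m_k(\V{x}) < r'}$ equals $\sum_{i=1}^k \ind{d(z,x_i) < d(z,x_j)\ \forall j \ne i}\,\ind{r \le d(z,x_i) < r'}$ almost everywhere. Singling out $i = 1$ by symmetry of $e^{-H}$, using $H(\V{x}) = \phi_1(x_1) + H(x_2, \dots, x_k \mid x_1)$, and disintegrating the $x_1$-integral with respect to $x_1 \mapsto d(z,x_1)$ (the pushforward of $\vol|_\Lambda$ under this map is absolutely continuous by \Cref{as:continuity}, giving $\int_\Lambda h(x)\,\vol(\diff x) = \int_0^\infty \int_\Lambda h(x)\, \vol^1_s(\diff x)\, g_1(s)\,\diff s$ with $\vol^1_s$ supported on $\{d(z,\cdot) = s\}$, as in \Cref{lem:as:continuity}), one obtains
\[
  F(r) - F(r') = \int_r^{r'} \psi(s)\,\diff s, \qquad \psi(s) \coloneqq \lambda\, g_1(s) \int_\Lambda e^{-\phi_1(w)}\, Z_{\Lambda_s, \phi}(\lambda \mid w)\, \vol^1_s(\diff w),
\]
where on the support of $\vol^1_s$ the condition ``$x_j$ farther from $z$ than $x_1$'' becomes ``$d(z,x_j) \ge s$'', so that the residual $\ell$-fold integral assembles into $Z_{\Lambda_s, \phi}(\lambda \mid w)$. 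Since $|\psi|$ is dominated by an $L^1(\R_{\ge 0})$ function (its integral is at most $|\lambda|\,\vol(\Lambda)\, e^{|\lambda| \vol(\Lambda)}$), we get $F(r) = 1 + \int_r^\infty \psi(s)\,\diff s$, so $F$ is absolutely continuous on $\R_{\ge 0}$ with $F' = -\psi$ almost everywhere.

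Because $F$ is absolutely continuous, nowhere zero, and equal to $1$ beyond $R_\Lambda$, \Cref{lemma:ftc} applied on $[0, R_\Lambda + 1]$ gives $Z_{\Lambda, \phi}(\lambda) = F(0)/F(R_\Lambda+1) = \exp\left(\int_0^\infty \psi(s)/F(s)\,\diff s\right)$, the upper limit extended to $\infty$ since $g_1$, and hence $\psi$, vanishes beyond $R_\Lambda$; this identifies $\log Z_{\Lambda, \phi}(\lambda) = \int_0^\infty \psi(s)/F(s)\,\diff s$ in the relevant branch. It remains to identify the integrand. Since $\vol^1_s$ is carried by $\{d(z,w) = s\}$, there $\phi^{(w)}$ coincides with the radius-$s$ one-body modification of $\phi$, whence $F(s) = Z_{\Lambda, \phi^{(w)}}(\lambda)$ and $Z_{\Lambda_s, \phi}(\lambda \mid w) = Z_{\Lambda, \phi^{(w)}}(\lambda \mid w)$ (the one-body constraint commutes with pinning at $w$), and moreover $\phi^{(w)}(w) = \phi_1(w)$; hence $\psi(s)/F(s) = g_1(s) \int_\Lambda e^{-\phi^{(w)}(w)}\, \kappa^{(w)}(w)\, \vol^1_s(\diff w)$, with $\kappa^{(w)}(w)$ bounded uniformly in $w$ by $\delta$-zero-freeness together with closure under modification (cf.\ the proof of \Cref{lemma:continuity}). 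Reversing the disintegration via \Cref{lem:as:continuity} then yields $\int_0^\infty \psi(s)/F(s)\,\diff s = \int_\Lambda e^{-\phi^{(x)}(x)}\, \kappa^{(x)}(x)\, \vol(\diff x)$, which is the claim.

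The step I expect to be the main obstacle is the disintegration bookkeeping in the second paragraph: splitting the $k$-fold integral according to which point is closest to $z$ (and checking that this closest point is almost surely unique), reducing cleanly to the one-dimensional disintegration of $\vol$ under $d(z, \cdot)$, and verifying that the residual $\ell$-fold integral is \emph{exactly} $Z_{\Lambda, \phi^{(w)}}(\lambda \mid w)$ --- i.e.\ that restricting the integration region to $\{d(z,\cdot) \ge s\}$ coincides, as a modification of the pinned potential, with the $\infty$ one-body term defining $\phi^{(w)}$. The remaining measurability checks (e.g.\ that $x \mapsto e^{-\phi^{(x)}(x)} \kappa^{(x)}(x)$ is measurable) and interchanges of summation and integration are routine, exactly as in the proof of \Cref{thm:itegral_identity}.
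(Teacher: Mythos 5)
Your proposal is correct and follows essentially the same route as the paper: peel $\Lambda$ from the inside out via the one-parameter family $t\mapsto Z_{\Lambda,\phi^{(t)}}(\lambda)$, single out the ($\vol^k$-a.s.\ unique) point closest to $z$ so that the residual integral assembles into $Z_{\Lambda,\phi^{(x)}}(\lambda\mid x)$, disintegrate via \Cref{lem:as:continuity} to get absolute continuity and the a.e.\ derivative, and conclude with \Cref{lemma:ftc} before reversing the disintegration. The only cosmetic difference is that you package the key identity as $F(r)-F(r')=\int_r^{r'}\psi$ rather than the paper's fixed-$t$ identity $Z_{\Lambda,\phi^{(t)}}(\lambda)-1=\int_\Lambda \lambda e^{-\phi^{(t)}(x)}Z_{\Lambda,\phi^{(x)}}(\lambda\mid x)\,\vol(\diff x)$.
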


\begin{proof}
    For $t \in [0, \infty]$ define the potential function $\phi^{(t)}$ with $\phi^{(t)}_k = \phi_k$ for all $k \ge 2$ and
    \[
        \phi^{(t)}_1(y) \coloneqq \begin{cases}
            \infty \text{ if } d(z, y) < t \\
            \phi_1(y) \text{ otherwise }
        \end{cases}
    \]
    for the same $z \in \X$ as in the definition of $\phi^{(x)}$.
    Write $H^{(x)}$ and $H^{(t)}$ for the energy functions associated with $\phi^{(x)}$ and $\phi^{(t)}$.
    Moreover, define $\Lambda_x = \{y \in \Lambda \mid d(z, y) < d(z, x)\}$.
    We start by observing that, for every $m \ge 1$, it holds that
    \begin{align*}
        \int_{\Lambda^m} \lambda^m e^{-H^{(t)}(\V{z})} \vol^m(\diff \V{z})
        &= m \int_{\Lambda} \int_{(\Lambda_x)^{m-1}} \lambda^m e^{-H^{(t)}(x, \V{z})} \vol^{m-1}(\diff \V{z}) \vol(\diff x) \\
        &= m \int_{\Lambda} \lambda e^{-\phi^{(t)}(x)} \int_{(\Lambda_x)^{m-1}} \lambda^{m-1} e^{-H^{(t)}(\V{z} \mid x)} \vol^{m-1}(\diff \V{z}) \vol(\diff x)
        \\
        &= m \int_{\Lambda} \lambda e^{-\phi^{(t)}(x)} \int_{\Lambda^{m-1}} \lambda^{m-1} e^{-H^{(x)}(\V{z} \mid x)} \vol^{m-1}(\diff \V{z}) \vol(\diff x),
    \end{align*}
    where the first equality follows from the fact that $H^{(t)}$ is
    invariant under permutation of points. 
    Hence 
    \begin{align*}
        Z_{\Lambda, \phi^{(t)}}(\lambda) - 1 
        = \sum_{m \in \N} \frac{\lambda^m}{m!} \int_{\Lambda^m} e^{-H^{(t)}(\V{z})} \vol^{m}(\diff \V{z}) 
        = \int_{\Lambda} \lambda e^{-\phi^{(t)}(x)} Z_{\Lambda, \phi^{(x)}}(\lambda \mid x) \vol(\diff x).
    \end{align*}
    The last integral can be written as a double integral by
    \Cref{lem:as:continuity} in the case $k=1$. Using the fundamental
    theorem of calculus for Lebesgue integrable functions then yields
    that $t \mapsto Z_{\Lambda, \phi^{(t)}}(\lambda)$ is absolutely
    continuous on bounded intervals and almost everywhere
    differentiable with
    \begin{equation*}
        \frac{\partial Z_{\Lambda, \phi^{(t)}}(\lambda)}{\partial t} 
        = \int_{\Lambda} \lambda e^{-\phi^{(t)}(x)} Z_{\Lambda, \phi^{(x)}}(\lambda \mid x) \vol_t(\diff x) g(t) \diff t.
    \end{equation*}
    Further, observe that $Z_{\Lambda, \phi^{(\infty)}}(\lambda) = \lim_{t \to \infty} Z_{\Lambda, \phi^{(t)}}(\lambda) = 1$ and $Z_{\Lambda, \phi^{(0)}}(\lambda) = Z_{\Lambda, \phi}(\lambda)$.
    Applying \Cref{lemma:ftc} similarly to the proof of \Cref{thm:itegral_identity}, using that $\Phi$ is closed under modification and $\delta$-zero-free, yields
    \begin{align*}
        \log(Z_{\Lambda, \phi}(\lambda))
        &= - \int_0^{\infty} \frac{1}{Z_{\Lambda,\phi^{(t)}}(\lambda)}\frac{\partial Z_{\Lambda, \phi^{(t)}}(\lambda)}{\partial t} \diff t \\
        &= - \int_{\R_{\ge 0}} \int_{\Lambda} \lambda e^{-\phi^{(t)}(x)} \frac{Z_{\Lambda, \phi^{(x)}}(\lambda \mid x)}{Z_{\Lambda, \phi^{(t)}}(\lambda)} \vol_t(\diff x) g(t) \diff t.
    \end{align*}
    Using that $Z_{\Lambda, \phi^{(t)}}(\lambda) = Z_{\Lambda, \phi^{(x)}}(\lambda)$ for almost all $t$ and $\vol_t$-almost all $x$ gives  
    \begin{align*}
        \log(Z_{\Lambda, \phi}(\lambda))
        &= - \int_{\R_{\ge 0}} \int_{\Lambda} \lambda e^{-\phi^{(t)}(x)} \frac{Z_{\Lambda, \phi^{(x)}}(\lambda \mid x)}{Z_{\Lambda, \phi^{(x)}}(\lambda)} \vol_t(\diff x) g(t) \diff t \\
        &= \int_{\Lambda} e^{-\phi^{(x)}(x)} \kappa^{(x)}(x) \vol(\diff x).\qedhere
    \end{align*}
\end{proof}

We denote by $\Phi_R$ the collection of all repulsive potentials with range at most $R$. 
Our main result is the following.
\begin{theorem} \label{thm:bounded_range} Suppose $(\X, d, \vol)$
  satisfies \Cref{as:continuity}. Let
  $B_r \coloneqq \sup_{x \in \X} \vol(\ball_x(r))$.  For $R\geq 0$ and
  $\lambda \in \C$ with $\absolute{\lambda} < \frac{1}{e B_R}$, there
  is a $C \ge 0$ such that for all bounded, measurable
  $\Lambda \subseteq \X$ and $\phi \in \Phi_R$ it holds that
  $\absolute{\log(Z_{\Lambda, \phi}(\lambda))} \le C \cdot
  \vol(\Lambda)$.
\end{theorem}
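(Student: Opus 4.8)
The plan is to derive the estimate from two ingredients already in hand: the integral identity \Cref{thm:itegral_identity} for the modified one-point densities, and the formula of \Cref{lemma:partitionfunctionFromDensity} expressing $\log Z_{\Lambda,\phi}$ as an integral of a modified one-point density. A first remark is that $\Phi_R$ is \emph{itself} closed under modification in the sense of \Cref{def:closed_under_modification}: partial pinnings $\phi(\cdot\mid y_{\prec t})$ and the $1$-body modifications preserve repulsiveness, and preserve having range at most $R$, since range only constrains the arities $k\ge 2$ and, whenever $\diam(\V x)>R$, the added term $\phi(\V x,y)$ has diameter $\ge\diam(\V x)>R$ and hence vanishes. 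Thus \Cref{thm:itegral_identity} and \Cref{lemma:partitionfunctionFromDensity} apply to every $\phi\in\Phi_R$ whenever $\Phi_R$ is $\delta$-zero-free on the relevant $\Lambda$ at the relevant $\lambda$. Granting the two steps below, the theorem follows: the second step gives, for $|\lambda|<(eB_R)^{-1}$ and any bounded measurable $\Lambda$, that $\Phi_R$ is $\delta$-zero-free on $\Lambda$ at $\lambda$ and $\sup_{\phi\in\Phi_R,\,x\in\Lambda}|\kappa_{\Lambda,\phi,\lambda}(x)|\le a$, where $a=a(|\lambda|)$ is the smallest root of $a=|\lambda|e^{aB_R}$; then, since $\phi^{(x)}\in\Phi_R$ and $|e^{-\phi^{(x)}(x)}|\le1$, \Cref{lemma:partitionfunctionFromDensity} gives $|\log Z_{\Lambda,\phi}(\lambda)|\le\int_\Lambda|e^{-\phi^{(x)}(x)}|\,|\kappa^{(x)}(x)|\,\vol(\diff x)\le a\,\vol(\Lambda)$, so $C=a$ works and depends only on $|\lambda|$ and $B_R$.

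The main work, and the step I expect to be the real obstacle, is a \emph{contraction estimate} for the integral identity: if $\Phi_R$ is $\delta$-zero-free on a bounded measurable $\Lambda$ at an activity $\lambda$ and $b:=\sup_{\phi\in\Phi_R,\,x\in\Lambda}|\kappa_{\Lambda,\phi,\lambda}(x)|$ satisfies $bB_R<1$, then $|\kappa_{\Lambda,\phi,\lambda}(y)|\le|\lambda|e^{bB_R}$ for all $\phi\in\Phi_R$, $y\in\Lambda$. To prove this I would write $\kappa(y)=\lambda e^{-\Sigma}$ with $\Sigma$ the exponent in \eqref{eq:itegral_identity}; since the kernel $(1-e^{-\phi(y,\V w)})\,e^{-H(\V w\mid y_{\prec\V w})}$ is real and non-negative, $|\kappa(y)|=|\lambda|e^{-\operatorname{Re}\Sigma}\le|\lambda|e^{T}$, where $T$ is the series defining $\Sigma$ with $\kappa(\V w\mid y_{\prec\V w})$ replaced by its modulus (the series converges absolutely by repulsiveness, as usual). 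I would then bound $T\le bB_R$ using three facts: (i) the telescoping \eqref{eq:telescoping}, applied to the iterated (partial) pinnings of $\phi$ — all of which lie in $\Phi_R$ — gives $|\kappa(\V w\mid y_{\prec\V w})|\le b^{\,k}$ for $\V w\in\Lambda^k$; (ii) specialising the computation in the proof of \Cref{lemma:partition_function_identity} to $t=D(\V w)$ shows that $e^{-H(\V w\mid y_{\prec\V w})}=e^{-H(\V w)}\prod_{\emptyset\ne S\subsetneq[k]}e^{-\phi(y,\V w_S)}$ for $\vol^k$-almost every $\V w$, and retaining only the singleton sets $S=\{i\}$ (proper subsets of $[k]$ once $k\ge2$) together with $H(\V w)\ge0$ gives $e^{-H(\V w\mid y_{\prec\V w})}\le\prod_{i=1}^k e^{-\phi_2(y,w_i)}$ for $k\ge2$, while for $k=1$ it is $e^{-\phi_1(w_1)}\le1$; and (iii) $(1-e^{-\phi(y,\V w)})$ vanishes unless $\diam(y,\V w)\le R$, hence unless $w_1,\dots,w_k\in\ball_y(R)$. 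Writing $U:=\int_{\Lambda\cap\ball_y(R)}(1-e^{-\phi_2(y,w)})\,\vol(\diff w)$ and $V:=\int_{\Lambda\cap\ball_y(R)}e^{-\phi_2(y,w)}\,\vol(\diff w)$, these combine to give $T\le bU+\sum_{k\ge2}(bV)^k/k!=bU+e^{bV}-1-bV$; since $U+V=\vol(\Lambda\cap\ball_y(R))\le B_R$ yields $bU\le bB_R-bV$, and since $e^v-1-2v\le0$ for $v\in[0,1]$ while $bV\le bB_R<1$, we conclude $T\le bB_R+(e^{bV}-1-2bV)\le bB_R$. I flag this as the main obstacle because the naive bounds $e^{-H(\V w\mid y_{\prec\V w})}\le1$ and $1-e^{-\phi(y,\V w)}\le1$ only give $T\le e^{bB_R}-1$, which when fed into the bootstrap below would restrict the disk to about $|\lambda|<0.26\,B_R^{-1}$ rather than $(eB_R)^{-1}$; it is the retention of the singleton factors $e^{-\phi_2(y,w_i)}$ produced by the partial pinning, used to offset the factors $1-e^{-\phi_2(y,w_i)}$, that recovers the sharp constant.

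Finally I would run the usual continuity bootstrap in the activity. Fix $\lambda$ with $|\lambda|<(eB_R)^{-1}$ and a bounded measurable $\Lambda$. For $s\in[0,1]$ let $a_s$ be the smallest root of $a=s|\lambda|e^{aB_R}$; it exists, is continuous and nondecreasing with $a_0=0$, satisfies $a_sB_R<1$, and is the attracting fixed point of $\Psi_s(c):=s|\lambda|e^{cB_R}$, whose other (repelling) fixed point $\hat a_s>a_s$ bounds its basin of attraction. Let $G$ be the set of $s\in[0,1]$ such that, for every $s'\in[0,s]$, $\Phi_R$ is $\delta$-zero-free on $\Lambda$ at $s'\lambda$ for some $\delta>0$ and $\sup_{\phi,x}|\kappa_{\Lambda,\phi,s'\lambda}(x)|\le a_{s'}$; then $G$ is nonempty (it contains $0$, since $Z_{\Lambda,\phi}(0)=1$ and $\kappa_{\Lambda,\phi,0}\equiv0=a_0$) and downward closed. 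I would show $G$ is closed: at $s=\sup G$, \Cref{lemma:partitionfunctionFromDensity} applied at $s'<s$ gives $|\log Z_{\Lambda,\phi}(s'\lambda)|\le a_{s'}\vol(\Lambda)\le a_s\vol(\Lambda)$ uniformly in $\phi$, hence $|Z_{\Lambda,\phi}(s'\lambda)|\ge e^{-a_s\vol(\Lambda)}$; by the uniform equicontinuity of $\{Z_{\Lambda,\phi}(\cdot)\}$ from \Cref{lemma:continuity} the same bound holds at $s\lambda$, so $\Phi_R$ is $\delta$-zero-free there, and then the uniform equicontinuity of $\{\kappa_{\Lambda,\phi,\cdot}(x)\}$ near $s\lambda$ (again \Cref{lemma:continuity}) lets $s'\uparrow s$ to give $\sup_{\phi,x}|\kappa_{\Lambda,\phi,s\lambda}(x)|\le a_s$. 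And $G$ is relatively open: near $s_0\in G$, $\delta$-zero-freeness persists and \Cref{lemma:continuity} bounds $\sup_{\phi,x}|\kappa_{\Lambda,\phi,s\lambda}(x)|\le a_{s_0}+\eta'$ for $s$ close to $s_0$, where $\eta'>0$ is chosen with $a_{s_0}+\eta'<\hat a_{s_0}$ and $(a_{s_0}+\eta')B_R<1$; iterating the contraction estimate at activity $s\lambda$ (legitimate since all iterates stay in $[0,\hat a_s)$ and below $B_R^{-1}$ once $s$ is close to $s_0$) replaces this bound by $\Psi_s^{\,n}(a_{s_0}+\eta')$ for every $n$, hence by $a_s$ in the limit. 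Therefore $G=[0,1]$; in particular $1\in G$, which is exactly the input invoked in the first paragraph, completing the proof.
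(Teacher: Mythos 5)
Your proposal is correct, and its overall architecture coincides with the paper's: apply \Cref{thm:itegral_identity} together with the telescoping bound $|\kappa(\V{w}\mid y_{\prec\V{w}})|\le b^k$, establish a contraction for the exponent, bootstrap via the equicontinuity of \Cref{lemma:continuity}, and convert the density bound into zero-freeness and the final estimate via \Cref{lemma:partitionfunctionFromDensity}. The genuine difference lies in how you prove the contraction, which is the technical core of the argument. The paper's \Cref{claim:boundedRange} establishes $\sum_{k}\frac{z^k}{k!}\int_{\Lambda^k}(1-e^{-\phi(y,\V{x})})e^{-H(\V{x}\mid y_{\prec\V{x}})}\vol^k(\diff\V{x})\le 1$ for $z\le B_R^{-1}$ by an induction on truncations $G_N^\phi$: the $k=N$ and $k=N+1$ terms are merged and the result is dominated by $G_N^{\hat\phi}$ for an auxiliary potential $\hat\phi\in\Phi_R$ that replaces all arities above $N$ by a hard constraint. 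You instead prove the slightly sharper bound $T\le bB_R$ directly, using the almost-everywhere factorization $e^{-H(\V{w}\mid y_{\prec\V{w}})}=e^{-H(\V{w})}\prod_{\emptyset\ne S\subsetneq[k]}e^{-\phi(y,\V{w}_S)}$ (which is implicit in the proof of \Cref{lemma:partition_function_identity}), retaining only the singleton factors $e^{-\phi_2(y,w_i)}$, and closing the estimate with $U+V\le B_R$ and the elementary inequality $e^v\le 1+2v$ on $[0,1]$. Your diagnosis is also accurate: it is precisely the pairing of the surviving factors $e^{-\phi_2(y,w_i)}$ against $1-e^{-\phi_2(y,w)}$ that recovers the constant $(eB_R)^{-1}$, and both proofs exploit this, just packaged differently — the paper folds it into the inductive replacement by $\hat\phi$, while you make it explicit through $U$ and $V$. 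Your version is arguably more transparent and avoids the auxiliary potentials; the paper's induction has the feature of never needing the a.e.\ factorization beyond what \Cref{lemma:partition_function_identity} already provides. The remaining discrepancies are cosmetic: you bootstrap along the ray $\{s\lambda\}_{s\in[0,1]}$ with the $s$-dependent target $a_s$ (smallest root of $a=s|\lambda|e^{aB_R}$), whereas the paper runs the same continuity argument over closed disks $\overline{\D(t)}$ with the fixed target $B_R^{-1}$ and a compactness step; both yield a constant $C$ depending only on $|\lambda|$ and $B_R$.
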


The last ingredient needed for the proof of the theorem beyond 
\Cref{thm:itegral_identity} and
\Cref{lemma:partition_function_identity} is the following contraction
argument.
\begin{lemma}
  \label{claim:boundedRange}
  Let $\Lambda \subseteq \X$ be bounded and measurable and
  $\lam \in \mathbb C$ with $|\lam| \le \frac{1 - \varepsilon}{e B_R}$
  for some $\varepsilon > 0$.  For all $\phi \in \Phi_{R}$ and all
  $f \colon \bigcup_{k \in \N} \X^k \to \mathbb C$ with
  $|f(\V{x})| \le B_R^{-k}$ for $\vol^k$-almost all
  $\V{x}\in \Lam^{k}$ it holds that
    \begin{equation*}
        \left | \lam \exp \left(- \sum_{k \in \N} \frac{1}{k!} \int_{\Lambda^k} \left(1 - e^{-\phi(y, \V{x})}\right) \cdot e^{-H(\V{x} \mid y_{\prec \V{x}})} f(\V{x}) \vol^k(\diff \V{x})\right)   \right |  \le   \frac{1 - \varepsilon}{B_R} \,.
    \end{equation*}
\end{lemma}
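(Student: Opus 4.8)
The plan is to bound $\lvert\lambda e^{-E}\rvert=\lvert\lambda\rvert\,e^{-\operatorname{Re}E}$, where $E:=\sum_{k\in\N}\frac1{k!}\int_{\Lambda^k}\bigl(1-e^{-\phi(y,\V x)}\bigr)e^{-H(\V x\mid y_{\prec\V x})}f(\V x)\,\vol^k(\diff\V x)$ is the series in the exponent, and to reduce the whole statement to the single scalar inequality
\[
\Sigma\;:=\;\sum_{k\in\N}\frac{B_R^{-k}}{k!}\int_{\Lambda^k}\bigl(1-e^{-\phi(y,\V x)}\bigr)e^{-H(\V x\mid y_{\prec\V x})}\,\vol^k(\diff\V x)\;\le\;1.
\]
Indeed, since $\phi$ is repulsive the weights $1-e^{-\phi(y,\V x)}$ and $e^{-H(\V x\mid y_{\prec\V x})}$ are nonnegative, so by the triangle inequality and the hypothesis $\lvert f(\V x)\rvert\le B_R^{-k}$ one gets $-\operatorname{Re}E\le\lvert E\rvert\le\Sigma$; given $\Sigma\le1$ this yields $\lvert\lambda e^{-E}\rvert\le\lvert\lambda\rvert\,e\le\frac{1-\varepsilon}{eB_R}\,e=\frac{1-\varepsilon}{B_R}$, as required. (Convergence of $E$ and $\Sigma$ is not an issue: each inner integral is at most $\vol(\Lambda)^k$.)

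The substance is therefore $\Sigma\le1$, and here I would first localize. Because $\phi$ has range $R$, the factor $1-e^{-\phi(y,\V x)}$ is supported on $\{\diam(y,x_1,\dots,x_k)\le R\}$, hence in particular on $\{x_1,\dots,x_k\in\ball_y(R)\}$, so the $k$-th integral may be taken over $\Lambda_y^k$ with $\Lambda_y:=\Lambda\cap\ball_y(R)$; note $\vol(\Lambda_y)\le B_R$. Next, since for $\vol^k$-almost every $\V x$ one has $\V x_S\prec\V x$ precisely when $S\subsetneq[k]$ (part~\ref{lemma:ordering:maximum} of \Cref{lemma:ordering}), it follows that $H(\V x\mid y)=H(\V x\mid y_{\prec\V x})+\phi(y,\V x)$ a.e., and therefore $(1-e^{-\phi(y,\V x)})e^{-H(\V x\mid y_{\prec\V x})}=e^{-H(\V x\mid y_{\prec\V x})}-e^{-H(\V x\mid y)}$. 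Writing $A_j:=\int_{\Lambda_y^j}e^{-H(\V x\mid y)}\,\vol^j(\diff\V x)\ge0$ (with $A_0=1$), so that $\sum_{j\ge0}\frac{B_R^{-j}}{j!}A_j=Z_{\Lambda_y,\phi}(B_R^{-1}\mid y)\le e<\infty$, the $k$-th integral in $\Sigma$ equals $\int_{\Lambda_y^k}e^{-H(\V x\mid y_{\prec\V x})}\,\vol^k(\diff\V x)-A_k$.

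The crucial estimate is then
\[
\int_{\Lambda_y^k}e^{-H(\V x\mid y_{\prec\V x})}\,\vol^k(\diff\V x)\;\le\;\vol(\Lambda_y)\,A_{k-1}\qquad(k\ge1),
\]
which I would prove by observing that $H(\V x_{[k-1]}\mid y)$ is obtained from $H(\V x\mid y_{\prec\V x})$ by deleting the nonnegative potential terms $\phi(\V x_S)$ and $\phi(\V x_S,y)$ indexed by subsets $S\ni k$; hence $e^{-H(\V x\mid y_{\prec\V x})}\le e^{-H(\V x_{[k-1]}\mid y)}$, and the right-hand side integrates over $x_k$ to $\vol(\Lambda_y)A_{k-1}$ by Fubini, the integrand being independent of $x_k$. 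Consequently the $k$-th term of $\Sigma$ is at most $\frac{B_R^{-k}}{k!}\bigl(\vol(\Lambda_y)A_{k-1}-A_k\bigr)$, and since $B_R^{-1}\vol(\Lambda_y)\le1$,
\[
\Sigma\;\le\;\sum_{k\ge1}\frac{B_R^{-(k-1)}}{k!}A_{k-1}-\sum_{k\ge1}\frac{B_R^{-k}}{k!}A_k
\;=\;\Bigl(1+\sum_{j\ge1}\tfrac{1}{j+1}\cdot\tfrac{B_R^{-j}}{j!}A_j\Bigr)-\sum_{j\ge1}\tfrac{B_R^{-j}}{j!}A_j
\;=\;1-\sum_{j\ge1}\tfrac{j}{j+1}\cdot\tfrac{B_R^{-j}}{j!}A_j\;\le\;1,
\]
all rearrangements being of nonnegative summable series.

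The step I expect to be the real obstacle is exactly $\Sigma\le1$: the naive bound $\int_{\Lambda_y^k}(\cdots)\le\vol(\Lambda_y)^k\le B_R^k$ only gives $\Sigma\le\sum_{k\ge1}\frac1{k!}=e-1>1$, which is too weak. What rescues it is the cancellation between the partial-pinning weight $e^{-H(\,\cdot\,\mid y_{\prec\,\cdot\,})}$ and the full-pinning weight $e^{-H(\,\cdot\,\mid y)}$, made quantitative by the telescoping bound $\int_{\Lambda_y^k}e^{-H(\V x\mid y_{\prec\V x})}\,\vol^k\le\vol(\Lambda_y)A_{k-1}$; morally this is the statement that $B_R^{-1}$ is a self-consistent value for the one-point density at the edge of the zero-free disk, which is why the constant $(eB_R)^{-1}$ comes out sharp.
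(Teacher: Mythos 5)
Your proposal is correct, and it reaches the key inequality $\Sigma\le 1$ by a genuinely different route than the paper. Both arguments agree on the outer reduction (bound $\lvert\lambda\rvert e^{\lvert E\rvert}$ by replacing $f$ with $B_R^{-k}$) and both rely on the same two structural facts: the a.e.\ strict ordering of sub-tuples from part~\ref{lemma:ordering:maximum} of \Cref{lemma:ordering}, and the resulting relation between the partially pinned Hamiltonian $H(\V{x}\mid y_{\prec\V{x}})$ and the fully pinned one. From there the paper proceeds by induction on the truncation level $N$ of the series: it folds the $(N{+}1)$-st term into the $N$-th via an effective interaction $\psi(\V{x},y)\in[0,1]$ supported on $\diam(y,\V{x})\le R$, dominates the combined term by replacing $\psi$ with a hard-core $(N{+}1)$-body interaction, and invokes the inductive hypothesis \emph{uniformly over the whole class} $\Phi_R$ for the modified potential $\hat\phi$. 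You instead argue directly: you localize to $\Lambda_y=\Lambda\cap\ball_y(R)$, rewrite each integrand as the difference $e^{-H(\V{x}\mid y_{\prec\V{x}})}-e^{-H(\V{x}\mid y)}$, bound the partially pinned integral by $\vol(\Lambda_y)A_{k-1}$ by discarding the nonnegative interaction terms involving the last coordinate, and sum the resulting nearly telescoping series (all rearrangements being legitimate since the relevant series are nonnegative and dominated by $e$). Your version is more elementary and arguably more transparent about where the constants $B_R$ and $e$ enter — the comparison $\int e^{-H(\cdot\mid y_{\prec\cdot})}\le\vol(\Lambda_y)A_{k-1}$ is exactly the self-consistency of the density bound $B_R^{-1}$ — whereas the paper's induction over the class of repulsive range-$R$ potentials stays closer to the potential-modification philosophy used throughout and illustrates the "combine adjacent arities" mechanism. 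I see no gap in your argument; the only points worth stating explicitly in a final write-up are that the $k$-th term of $\Sigma$ is itself nonnegative (so termwise domination is legitimate) and that the splitting of $\sum_k(a_k-b_k)$ into two convergent series is justified by the bounds $A_j\le\vol(\Lambda_y)^j\le B_R^j$.
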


\begin{proof}
    We show that, for all $z \in [0, B_R^{-1}]$ and $\phi
    \in \Phi_{R}$, it holds that 
    \begin{equation*}
        \sum_{k \in \N} \frac{z^k}{k!} \int_{\Lambda^k} \left(1 -
          e^{-\phi(y, \V{x})}\right) \cdot e^{-H(\V{x} \mid y_{\prec
            \V{x}})} \vol^k(\diff \V{x}) \le 1, 
    \end{equation*}
    from which the claim follows. To this end, define a sequence of functions 
    \begin{equation*}
        G_{N}^{\phi}(z) \coloneqq \sum_{k = 1}^{N} \frac{z^k}{k!} \int_{\Lambda^k} \left(1 - e^{-\phi(y, \V{x})}\right) \cdot e^{-H(\V{x} \mid y_{\prec \V{x}})} \vol^k(\diff \V{x})
    \end{equation*}
    for $N \in \N$. By taking $N\to\infty$, it suffices to inductively
    show that $G_N^{\phi} \le 1$.  The base case $N = 1$ follows as
    (i) since $\phi$ is repulsive, the exponential term can be omitted
    for an upper bound and (ii) the bounded range assumption gives an
    upper bound of $B_{R}$ for the remaining integral.
    
    Now, suppose $G_N^{\phi} \le 1$ for all $\phi \in \Phi_R$ and some
    $N \ge 1$. The proof will consist of finding a potential
    $\hat\phi\in \Phi_{R}$ such that $G_{N+1}^{\phi}(z)\leq
    G_{N}^{\hat\phi}(z)$; the claim then follows by induction. To
    achieve this we will combine the terms $k=N$ and $k=N+1$ in
    $G_{N+1}$. The key identity for establishing this is that
    \begin{equation*}
      H(\V{x}, w \mid y_{\prec \V{x}, w}) = H(\V{x} \mid y_{\prec \V{x}}) + \phi(\V{x}, y) + \sum_{T \subseteq [k]} \phi(\V{x}_T, w \mid y_{\prec \V{x}, w}),
    \end{equation*}
    which follows from splitting all sub-tuples of $(\V{x}, w)$ based on whether they contain $w$ or not, and the fact that $\V{x}_{S}\prec \V{x} \prec (\V{x}, w)$ for
    $\vol^{k}$-almost all $\V{x}\in \X^{k}$ and $\vol$-almost all $w \in \X$ provided $S\subsetneq
    [k]$. Using this, we
    will show that
    \begin{equation*}
      G_{N + 1}^{\phi}(z) 
        = G_{N-1}^{\phi}(z) + \frac{z^N}{N!}  \int_{\Lambda^N} \left(1
          - e^{-\phi(y, \V{x})} \cdot (1 - \psi(\V{x}, y))\right)
        \cdot e^{-H(\V{x} \mid y_{\prec \V{x}})}  \vol^N(\diff \V{x}). 
      \end{equation*}
      where
    \begin{equation*}
      \psi(\V{x}, y) = \frac{z}{N + 1} \int_{\Lambda} \left(1 - e^{-\phi(y, \V{x}, w)}\right) \prod_{T \subseteq [N]} e^{- \phi(\V{x}_T, w \mid y_{\prec \V{x}, w})} \vol(\diff w) .
    \end{equation*}
    Indeed, this holds as
    \begin{align*}
      &\frac{z^N}{N!} \int_{\Lambda^N} \left(1 - e^{-\phi(y,
        \V{x})}\right) \cdot e^{-H(\V{x} \mid y_{\prec \V{x}})} \vol^N(\diff \V{x}) + \frac{z^{N+1}}{(N+1)!} \int_{\Lambda^{N+1}}\left(1 -
        e^{-\phi(y, \V{w})}\right) \cdot e^{-H(\V{w} \mid y_{\prec
        \V{w}})} \vol^{N+1}(\diff \V{w}) \\
      &=
      \frac{z^N}{N!}  \int_{\Lambda^N} \left[
        \left(1 - e^{-\phi(y, \V{x})}\right) \cdot e^{-H(\V{x} \mid
        y_{\prec \V{x}})} + \frac{z}{N+1} \int_{\Lambda}\left(1 -
        e^{-\phi(y, \V{x}, w)}\right) \cdot e^{-H(\V{x}, w \mid
        y_{\prec \V{x}, w})} \vol(\diff w) \right] \vol^N(\diff \V{x}) \\
      &=
        \frac{z^N}{N!}  \int_{\Lambda^N} \left(1 - e^{-\phi(y, \V{x})} \cdot (1 - \psi(\V{x}, y))\right) \cdot e^{-H(\V{x} \mid y_{\prec \V{x}})} \vol^N(\diff \V{x}) ,
    \end{align*}
    
    Since $\phi$ is repulsive with range $\le R$ and $z \in [0, B_R^{-1}]$, it holds that $0 \le \psi(x, y) \le
    1$. 
    Since $\phi(y, \V{x}, w) = 0$ if $\diam(y, \V{x}) > R$, in
    fact $\psi(x, y) \le \ind{\diam(y, \V{x}) \le R}$.
    Thus, setting
    \begin{equation*}
        \hat{\phi}(\V{w}) \coloneqq \begin{cases}
            \phi(\V{w}) \text{ if } |\V{w}| \le N \\
            \ind{\diam(\V{w}) \le R} \cdot \infty \text{ otherwise}
        \end{cases}
    \end{equation*}
    we have that $\hat{\phi}$ is repulsive, and
    \begin{equation*}
        1 - e^{-\phi(y, \V{x})} \cdot (1 - \psi(\V{x}, y)) \le 1 - e^{-\hat{\phi}(y, \V{x})}
    \end{equation*}
    for all $\V{x} \in \Lambda^{N}$, and
    $G_{N-1}^{\phi}=G_{N-1}^{\hat\phi}$.  Hence, we have
    \begin{equation*}
        G_{N + 1}^{\phi}(z) 
        \le G_{N-1}^{\hat{\phi}}(z) + \frac{z^N}{N!}  \int_{\Lambda^N} \left(1 - e^{-\hat{\phi}(y, \V{x})}\right) \cdot e^{-H_{\hat{\phi}}(\V{x} \mid y_{\prec \V{x}})}  \vol^N(\diff \V{x})
        = G_{N}^{\hat{\phi}}(z) 
    \end{equation*}
    as desired.
\end{proof}

\begin{proof}[Proof of \Cref{thm:bounded_range}]
  Fix some bounded, measurable $\Lambda \subseteq \X$ and let $A \subseteq \C$ be the set of all activities $\lambda$ such that,  at activity $\lambda$, $\Phi_R$ is
  $e^{-\vol(\Lambda)/B_R}$-zero-free on $\Lambda$ and
  $\absolute{\kappa_{\phi}(x)} \le \frac{1}{B_R}$ for all
  $\phi \in \Phi_R$ and $x \in \Lambda$.  
  Our goal is to show that $\D(1/e B_R) \subseteq A$, where $\D(t)$ is the open disk of radius $t$ around $0$ in $\C$.
  We note that this is equivalent to showing that $t^*\coloneqq \sup\{t \ge 0 \mid \overline{\D(t)} \subseteq A\} \geq (e B_R)^{-1}$.

  Towards a contraction, suppose that $t^* = \frac{1 - \varepsilon}{e B_R}$ for some $\varepsilon > 0$.
  We first observe that $t^* \ge 0$, since at activity $\lambda=0$ the modified point densities are $0$ and the partition functions are $1$.
  Together with the uniform equicontinuity of modified 1-point densities and partition functions given in \Cref{lemma:continuity}, this implies $t^{*} > 0$.  
  Further, by a similar continuity argument, we know that
  $\overline{\D(t^*)} \subseteq A$.\footnote{To see this, note that it
    follows from \Cref{lemma:continuity} that $A$ is closed.}
  However, using the telescoping product in \eqref{eq:telescoping} and \Cref{rem:closed}, we note that, for every $\phi \in \Phi_R$, $x \in \Lambda$ and $|\lambda| = t^*$, the modified $k$-point densities that appear in the recursion for $\kappa_{\phi, \lambda}(x)$ given by \Cref{thm:itegral_identity} are bounded by $B_R^{-k}$.
  Thus, applying \Cref{claim:boundedRange} yields that, for every $\absolute{\lambda} = t^{*}$, all $\phi \in \Phi_R$ and all
  $x \in \Lambda$, it holds that
  $\absolute{\kappa_{\phi, \lambda}(x)} \le \frac{1 - \varepsilon}{B_R}$.
  Again, by uniform equicontinuity, we can conclude that there is some neighborhood $\mathcal{N}(\lambda)$, uniformly in $x \in \Lambda$ and $\phi \in \Phi_R$, such that $\absolute{\kappa_{\phi}(x)} \le \frac{1}{B_R}$ and $\Phi_R$ is
  $\delta$-zero-free for some $\delta > 0$ at all activities in $\mathcal{N}(\lambda)$.  
  An application of \Cref{lemma:partitionfunctionFromDensity} then proves that $\Phi_R$ is in fact $e^{-\vol(\Lambda)/B_R}$-zero-free on $\Lambda$ at all
  activities in $\mathcal{N}(\lambda)$.
  Since these neighborhoods $\mathcal{N}(\lambda)$ for activities $\absolute{\lambda} = t^*$ give an open cover of the compact set $\overline{\D(t^*)} \setminus \D(t^*)$, we can pass to a finite sub-cover to find some $t' > t^{*}$ such
  that $\overline{\D(t')} \subseteq A$, which contradicts the
  definition of $t^{*}$.
\end{proof}

\section*{Acknowledgments}

TH thanks the Hausdorff
Institute for Mathematics for its hospitality, where part of this
work was carried out (funded by the Deutsche Forschungsgemeinschaft under Germany's Excellence Strategy – EXC-2047/1 – 390685813).  MP was funded by the HPI Research School on Foundations of AI. WP supported in part by NSF grant DMS-2348743.

\appendix

\section{Proof of Lemma~\ref{lemma:ordering}} \label{apx:ordering}

\begin{proof}[Proof of \Cref{lemma:ordering}]
    For part \ref{lemma:ordering:measurability} symmetry is
    immediate. Measurability is shown by induction, using
    $D_{k+1}(x_1, \dots, x_{k+1}) = D_k(x_1, \dots, x_k) + d(z,
    x_{k+1})$. 

    For \ref{lemma:ordering:continuity} we need to show that for all
    Lebesgue null sets $U \subset \R$, it holds that
    $\vol^{k} \circ D_k^{-1} (U) = 0$ for all $k\geq 1$.  For $k=1$,
    this follows from \Cref{as:continuity}.  Inductively, if the
    statement holds for some $k \in \N$, then
    \begin{align*}
        \vol^{k+1} \circ D_{k+1}^{-1}(U)
        &= \int_{\X^{k+1}} \ind{D_{k+1}(\V{x}) \in U} \vol^{k+1}(\diff \V{x}) \\
        &= \int_{\X} \int_{\X^{k}} \ind{D_{k}(\V{x}) \in U - d(z, y)} \vol^{k}(\diff \V{x}) \vol(\diff y) ,
    \end{align*}
    where $U - d(z, y)$ is the translation of the set $U$ by $- d(z, y)$.
    Since this translate is again a Lebesgue null set, it follows from
    the induction hypothesis that the inner integral evaluates to
    $\vol^{k} \circ D_k^{-1} (U - d(z, y)) = 0$, concluding the
    induction.

    For \ref{lemma:ordering:maximum}, we note that the induced ordering is trivially irreflexive, asymmetric, and transitive.
    It remains to show that, for $\vol^k$-almost all $\V{x} \in \X^k$ and all $S \neq T$ it holds that $D(\V{x}_S) \neq D(\V{x}_T)$.
    We define 
    \[
        E_k \coloneqq \{\V{x} = (x_1, \dots, x_k) \in \X^k \mid \text{ $\exists S,T \subseteq [k], S \neq T$ such that $D(\V{x}_S) = D(\V{x}_T)$}\},
    \]
    and prove that $\vol^{k}(E_k) = 0$ by induction on $k$.
    For $k = 1$, this follows from \Cref{as:continuity}.
    Suppose the statement holds for some $k \in \N$.
    For $\V{x} = (x_1, \dots, x_k) \in \X^k$, define
    \[
        F_{\V{x}} \coloneqq \big\{y \in \X \mid d(z, y) \in \bigcup_{S, T \subseteq [k]: S \neq T} \{\absolute{D(\V{x}_S) - D(\V{x}_T)}\}\big\}.
    \]
    A necessary conditions for $(x_1, \dots, x_{k+1}) \in E_{k+1}$ is that $(x_1, \dots, x_k) \in E_k$ or $x_{k+1} \in F_{x_1, \dots, x_k}$.
    Thus (applying a union bound) we have
    \[
        \vol^{k+1}(E_{k+1}) \le \int_{\X} \int_{\X^k} \ind{\V{x} \in E_k} \vol^{k}(\diff \V{x}) \vol(\diff x_{k+1}) + \int_{\X^k} \int_{\X} \ind{x_{k+1} \in F_{\V{x}}} \vol(\diff x_{k+1}) \vol^k(\diff \V{x}) . 
    \]
    By the induction hypothesis, the first term is $0$ and, by
    \Cref{as:continuity}, the second term is $0$. 
\end{proof}

\section{Proof of Lemma~\ref{lemma:ftc}} 
\label{sec:lemftc}
At first glance, \Cref{lemma:ftc} appears to follow from taking the
logarithm of both sides of the identity and then applying the
fundamental theorem of calculus.  However, under the conditions of the
statement, $\log f (z)$ might be discontinuous for any branch of the
complex logarithm.  Hence, the proof of \Cref{lemma:ftc} is slightly
more technical and requires some auxiliary claims.

To this end, we start with a recap of the basic definitions related to line integrals.
For $U \subseteq \C$ connected and open, a map $\gamma\colon [a, b] \to U$ is called a \emph{rectifiable path} in $U$ if it is continuous and has bounded variation.
Given a continuous function $f\colon U \to \C$, the line integral $\int_{\gamma} f(z) \diff z$ is defined as the limit of sums of the form
\begin{equation*}
    \sum_{j = 1}^n f(\gamma(t_j^*)) \cdot (\gamma(t_j) - \gamma(t_{j-1})) ,
\end{equation*}
where $a = t_0 < \dots < t_n = b$, $t_j^* \in [t_{j-1}, t_j]$ and the
limit is taken as the mesh size
$\max_{j \in [n]} \absolute{t_j - t_{j-1}}$ goes to $0$. See
\cite[Chapter IV, Theorem 1.4]{conway1978} for a proof of convergence.
We note that every absolutely continuous function
$\gamma\colon [a, b] \to U$ is a rectifiable path (see \cite[Lemma
3.34]{folland1999real}).  Our first ingredient is the following
identity for line integrals over absolutely continuous paths.
\begin{lemma}
  \label{claim:path_intergal}
  Suppose $\gamma\colon [a, b] \to U$ is an absolutely continuous path
  in a connected open set $U \subseteq \C$ and $f\colon U \to \C$ is
  continuous then
  \begin{equation*}
    \int_{\gamma} f(z) \diff z = \int_a^b f(\gamma(s)) \gamma'(s) \diff s,
  \end{equation*}
  where the integral on the right-hand side should be understood as a
  Lebesgue integral and $\gamma'$ is almost everywhere a derivative of
  $\gamma$.
\end{lemma}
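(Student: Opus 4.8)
The plan is to rewrite each approximating sum in the definition of $\int_\gamma f(z)\,\diff z$ as a single Lebesgue integral, and then to pass to the limit by dominated convergence. Two facts drive this. First, absolute continuity of $\gamma$ means that $\gamma$ is differentiable almost everywhere, that $\gamma' \in L^1([a,b])$, and that $\gamma(t)-\gamma(s) = \int_s^t \gamma'(u)\,\diff u$ for all $a \le s \le t \le b$ (the fundamental theorem of calculus for absolutely continuous functions). Second, $f\circ\gamma$ is continuous on the compact interval $[a,b]$, hence bounded there, say by $M \coloneqq \max_{t\in[a,b]}\absolute{f(\gamma(t))} < \infty$.

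First I would fix a tagged partition $a = t_0 < t_1 < \dots < t_n = b$ with tags $t_j^* \in [t_{j-1}, t_j]$ and introduce the associated step function $\sigma\colon [a,b]\to\C$ defined by $\sigma(s) \coloneqq f(\gamma(t_j^*))$ for $s \in (t_{j-1}, t_j]$ (its value at $s=a$ being immaterial). Using the integral representation of the increments of $\gamma$ and pulling the constant $f(\gamma(t_j^*))$ inside the integral over $(t_{j-1},t_j]$, the corresponding approximating sum becomes
\begin{equation*}
  \sum_{j=1}^{n} f(\gamma(t_j^*)) \cdot \bigl(\gamma(t_j) - \gamma(t_{j-1})\bigr) = \sum_{j=1}^{n} \int_{t_{j-1}}^{t_j} f(\gamma(t_j^*))\,\gamma'(s)\,\diff s = \int_a^b \sigma(s)\,\gamma'(s)\,\diff s ,
\end{equation*}
since the half-open subintervals $(t_{j-1},t_j]$ partition $(a,b]$, and $\{a\}$ is a null set.

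Next I would take an arbitrary sequence of tagged partitions whose mesh tends to $0$ and let $\sigma_n$ denote the corresponding step functions. For every fixed $s \in (a,b]$, the tag of the subinterval containing $s$ differs from $s$ by at most the mesh, hence converges to $s$; continuity of $\gamma$ followed by continuity of $f$ then gives $\sigma_n(s) \to f(\gamma(s))$ for all such $s$. Since $\absolute{\sigma_n} \le M$ pointwise and $M\absolute{\gamma'} \in L^1([a,b])$, the dominated convergence theorem yields $\int_a^b \sigma_n(s)\,\gamma'(s)\,\diff s \to \int_a^b f(\gamma(s))\,\gamma'(s)\,\diff s$. As $\int_\gamma f(z)\,\diff z$ is by definition the limit of the approximating sums (its existence being guaranteed by \cite[Chapter IV, Theorem 1.4]{conway1978}, or, alternatively, reproved by the present computation since the limit is the same along every such sequence), we conclude $\int_\gamma f(z)\,\diff z = \int_a^b f(\gamma(s))\,\gamma'(s)\,\diff s$.

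The only genuine point of care is the dominated convergence step: verifying the convergence $\sigma_n \to f\circ\gamma$ on all of $(a,b]$ (which rests on the tags lying inside shrinking subintervals together with joint continuity of $f\circ\gamma$) and exhibiting the $L^1$ dominating function $M\absolute{\gamma'}$ — where the integrability of $\gamma'$ and the representation of the increments of $\gamma$ as integrals of $\gamma'$ are exactly the content of absolute continuity. Everything else is routine; in particular the complex-valuedness of $f$ causes no difficulty, as one may split into real and imaginary parts or simply note that the measure-theoretic facts invoked apply verbatim to $\C$-valued functions.
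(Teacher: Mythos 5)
Your proof is correct and follows essentially the same route as the paper's: both express the increments $\gamma(t_j)-\gamma(t_{j-1})$ via the fundamental theorem of calculus for absolutely continuous functions and then pass to the limit using continuity of $f\circ\gamma$ together with integrability of $\gamma'$. The only cosmetic difference is that you finish with dominated convergence (pointwise convergence of the step functions $\sigma_n$ dominated by $M\lvert\gamma'\rvert$), whereas the paper invokes uniform continuity of $f\circ\gamma$ on $[a,b]$ to bound the discrepancy directly by $\varepsilon\int_a^b\lvert\gamma'(s)\rvert\,\diff s$.
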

While we expect \Cref{claim:path_intergal} is well-known, it is
typically presented for piecewise smooth $\gamma$, in which case the
right-hand side can be treated as a Riemann integral. For completeness
we give a proof of the more general claim. 
\begin{proof}[Proof of \Cref{claim:path_intergal}]
  As $\gamma$ is absolutely continuous, it is almost everywhere
  differentiable and the derivative $\gamma'$ is Lebesgue integrable
  on $[a, b]$~\cite[Theorem 7.20]{rudin1987real}.  We must show 
  \begin{equation*}
    \sum_{j = 1}^n f(\gamma(t_j^*)) \cdot (\gamma(t_j) - \gamma(t_{j-1})) ,
  \end{equation*}
  converges to $\int_a^b f(\gamma(s)) \gamma'(s) \diff s$ as the mesh
  size $\max_{j \in [n]} \absolute{t_j - t_{j-1}}$ approaches $0$. 
  Note that for all $a \le t_{j-1} < t_{j} \le b$ the
  second fundamental theorem of calculus for Lebesgue integrals (see
  \cite[Theorem 7.20]{rudin1987real}) yields $ \gamma(t_j) -
  \gamma(t_{j-1}) = \int_{t_{j-1}}^{t_{j}} \gamma'(s) \diff s$, and
  hence 
    \begin{equation*}
          \absolute{\int_a^b f(\gamma(s)) \gamma'(s) \diff s -  \sum_{j = 1}^n f(\gamma(t_j^*)) \cdot (\gamma(t_j) - \gamma(t_{j-1}))}
              \le \sum_{j = 1}^n \int_{t_{j-1}}^{t_j} \absolute{f(\gamma(s)) - f(\gamma(t_j^*))} \cdot \absolute{\gamma'(s)} \diff s.
    \end{equation*}
    Note that $f \circ \gamma$ is uniformly continuous on $[a, b]$ by compactness.
    Hence for $\varepsilon > 0$ we may choose $\max_{j \in [n]} \absolute{t_j - t_{j-1}}$ small enough such that
    \begin{equation*}
        \sum_{j = 1}^n \int_{t_{j-1}}^{t_j} \absolute{f(\gamma(s)) - f(\gamma(t_j^*))} \cdot \absolute{\gamma'(s)} \diff s \le \varepsilon \cdot \int_a^b \absolute{\gamma'(s)} \diff s
    \end{equation*}
    which implies the result since $\gamma'$ is integrable. 
\end{proof}

\begin{lemma}
  \label{claim:logFromLineintegral}
    Let $\gamma\colon [a, b] \to U$ be a rectifiable path in a
    connected open set $U \subseteq \C \setminus \{0\}$. Then
    \begin{equation*}
        \exp \left( \int_{\gamma} \frac{1}{z} \diff z \right) = \frac{\gamma(b)}{\gamma(a)} . \qedhere
    \end{equation*}
\end{lemma}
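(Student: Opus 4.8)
The plan is to chop the parameter interval into finitely many pieces on each of which $\gamma$ stays inside a small disk avoiding the origin, evaluate the line integral over each piece using a local branch of the logarithm, and then multiply the resulting local identities so that they telescope to the global one. Concretely: the image $\gamma([a,b])$ is compact and disjoint from $0$, hence lies at some distance $\rho>0$ from $0$; by uniform continuity of $\gamma$ on $[a,b]$ I can pick a partition $a=t_0<t_1<\dots<t_n=b$ with $\diam\!\left(\gamma([t_{j-1},t_j])\right)<\rho$ for each $j$, so that $\gamma([t_{j-1},t_j])$ is contained in the open disk $D_j\coloneqq D(\gamma(t_{j-1}),\rho)$. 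Each $D_j$ is convex and does not contain $0$.

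On each $D_j$ the function $1/z$ is holomorphic, and since $D_j$ is convex it admits a holomorphic primitive $F_j$. Differentiating $z\,e^{-F_j(z)}$ gives $0$ on the connected set $D_j$, so $z\,e^{-F_j(z)}\equiv c_j$ for a constant $c_j\neq 0$; adding a constant $\omega_j$ with $e^{\omega_j}=c_j$, the function $L_j\coloneqq F_j+\omega_j$ is a primitive of $1/z$ on $D_j$ satisfying $e^{L_j(z)}=z$ for all $z\in D_j$, i.e.\ a genuine branch of the logarithm. I then invoke the fundamental theorem of calculus for line integrals along rectifiable paths: if $L_j'=1/z$ on $D_j$ and $\eta=\gamma|_{[t_{j-1},t_j]}$ is a rectifiable path in $D_j$, then $\int_{\eta}\frac{1}{z}\,\diff z=L_j(\gamma(t_j))-L_j(\gamma(t_{j-1}))$. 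This is standard (see~\cite[Chapter IV]{conway1978}); alternatively it can be proved directly from the definition of the line integral as a limit of sums by comparing $\sum_k L_j'(\gamma(s_k))\,(\gamma(s_k)-\gamma(s_{k-1}))$ with the telescoping sum $\sum_k\bigl(L_j(\gamma(s_k))-L_j(\gamma(s_{k-1}))\bigr)$, using that $L_j$ is $C^1$ and that $\gamma$ has bounded variation.

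Finally I assemble the pieces. Exponentiating the local identity and using $e^{L_j}=\mathrm{id}$ gives $\exp\!\left(\int_{\gamma|_{[t_{j-1},t_j]}}\frac{1}{z}\,\diff z\right)=\gamma(t_j)/\gamma(t_{j-1})$ for each $j$. Since the line integral is additive under subdivision of the parameter interval, $\int_{\gamma}\frac{1}{z}\,\diff z=\sum_{j=1}^{n}\int_{\gamma|_{[t_{j-1},t_j]}}\frac{1}{z}\,\diff z$, and exponentiating and multiplying the local identities telescopes to $\exp\!\left(\int_{\gamma}\frac{1}{z}\,\diff z\right)=\prod_{j=1}^{n}\frac{\gamma(t_j)}{\gamma(t_{j-1})}=\frac{\gamma(b)}{\gamma(a)}$, as claimed. (Equivalently, one can show the continuous function $t\mapsto\gamma(t)\exp\!\left(-\int_{\gamma|_{[a,t]}}\frac{1}{z}\,\diff z\right)$ is locally constant on $[a,b]$ by the same local-branch argument, hence constant.) The only genuinely non-routine input is the fundamental theorem of calculus for line integrals along merely rectifiable --- as opposed to piecewise $C^1$ --- paths; this is the step I expect to need the most care, and it is precisely where the bounded-variation hypothesis on $\gamma$ enters.
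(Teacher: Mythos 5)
Your proof is correct, but it takes a genuinely different route from the paper's. The paper first replaces $\gamma$ by a polygonal path $\Gamma$ with the same endpoints whose integral of $1/z$ is $\varepsilon$-close to that of $\gamma$ (\cite[Chapter~IV, Lemma~1.19]{conway1978}), evaluates the integral over each smooth line segment inside a simply connected set avoiding the origin, and then lets $\varepsilon \to 0$ using continuity of $\exp$. You instead subdivide $\gamma$ itself: compactness keeps the image at distance $\rho>0$ from the origin, uniform continuity gives a partition whose sub-arcs lie in convex disks $D_j$ with $0 \notin D_j$, on each $D_j$ you construct an honest branch $L_j$ of the logarithm (a primitive of $1/z$ normalized so that $e^{L_j(z)}=z$), and you evaluate each sub-integral via the fundamental theorem of calculus for line integrals along rectifiable paths. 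Both arguments are sound; the difference is which piece of standard machinery carries the weight. The paper leans on the polygonal approximation lemma and only ever needs the primitive-evaluation identity for smooth segments, while you avoid approximation entirely but must invoke the primitive-evaluation identity for merely rectifiable paths (also in \cite[Chapter~IV]{conway1978}). You correctly flag that as the one non-routine input, and your sketched argument for it does go through: writing $L_j(\gamma(s_k))-L_j(\gamma(s_{k-1}))$ as $\int L_j'$ over the chord joining the two points (legitimate because $D_j$ is convex) and comparing with the Riemann sums, the error is controlled by the modulus of continuity of $L_j'$ times the total variation of $\gamma$, which is exactly where bounded variation enters. Your parenthetical alternative (showing $t\mapsto\gamma(t)\exp(-\int_{\gamma|_{[a,t]}}z^{-1}\,\diff z)$ is locally constant) is an equally clean way to package the telescoping.
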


\Cref{claim:logFromLineintegral} says that $\int_{\gamma} \frac{1}{z} \diff z$ is a logarithm of $\gamma(b)/\gamma(a)$.
This is a standard result if $U \subseteq \C \setminus \{0\}$ is simply connected and $\gamma$ is piecewise smooth~\cite[Chapter III, §6]{lang2013complex}.
By approximating rectifiable paths by polygonal paths, we lift this to connected open sets.
\begin{proof}[Proof of \Cref{claim:logFromLineintegral}]
  By \cite[Chapter IV, Lemma 1.19]{conway1978}, it holds that, for
  every $\varepsilon > 0$, we can find some polygonal path
  $\Gamma\colon [a, b] \to U$ with $\Gamma(a) = \gamma(a)$,
  $\Gamma(b) = \gamma(b)$ and
  $\absolute{\int_{\gamma} \frac{1}{z} \diff z - \int_{\Gamma}
    \frac{1}{z} \diff z} < \varepsilon$.  Hence it suffices to show
  $\exp(\int_{\Gamma} \frac{1}{z} \diff z) = \Gamma(b)/\Gamma(a)$.
  Let $\Gamma_{i}$ denote the set of line segments comprising
  $\Gamma$. Since each $\Gamma_{i}$ avoids $0$, we can find some
  simply connected open set $U_i \subseteq \C \setminus \{0\}$ that
  contains $\Gamma_i$.  Since each $\Gamma_i$ is smooth,
  $\exp(\int_{\Gamma_i} \frac{1}{z} \diff z) =
  \Gamma(t_i)/\Gamma(t_{i-1})$.  Since
  $\int_{\Gamma} \frac{1}{z} \diff z = \sum_{i=1}^{k}\int_{\Gamma_i}
  \frac{1}{z} \diff z$ taking the exponential of both sides and
  collapsing the resulting telescopic product concludes the proof.
  \end{proof}

\begin{proof}[Proof of \Cref{lemma:ftc}]
  Set $\gamma\colon [0, t] \to \C, s \mapsto \frac{f(s)}{f(0)}$. Note
  that $\gamma(0) = 1$ and $\gamma(t) = \frac{f(t)}{f(0)}$.  By
  absolute continuity of $s \mapsto f(s)$ and the fact that $f$ avoids
  $0$, $\gamma$ is a rectifiable path in $\C \setminus \{0\}$.  Hence
  by \Cref{claim:logFromLineintegral} it holds that
  $\exp ( \int_{\gamma} \frac{1}{z} \diff z ) = \frac{f(t)}{f(0)}$,
  and it suffices to prove that
  $\int_{\gamma} \frac{1}{z} \diff z = \int_{0}^{t} \frac{f'(s)}{f(s)}
  \diff s$.  Since $z \mapsto 1/z$ is continuous on $\C \setminus 0$
  and $\gamma$ is an absolutely continuous path in
  $\C \setminus \{0\}$, this follows from \Cref{claim:path_intergal}
  and the fact that $\gamma' = f'$ almost everywhere.
\end{proof}

  \section{Lower Bound for Generic Repulsive Bounded-Range Potentials}
\label{sec:sharp}

Call a potential $\phi = (\phi_m)_{m \in \N}$ a \emph{pure $k$-body
  potential} if $\phi_{m}=0$ for all $m \neq k$. This appendix proves
the following proposition.

\begin{proposition}
  \label{prop:upperBound}
  Fix $R>0$, $k\ge 2$. There exists a sequence of complete, separable
  metric spaces $(\X_n, d_n,\vol_{n})_{n \in \N}$ satisfying
  \Cref{as:continuity} and a pure $k$-body potential $\phi^{(n)}$ of
  range $R$ such that
  \begin{enumerate}[noitemsep]
  \item $B^{(n)}_R \coloneqq \sup_{x \in \X_n} \vol_n(\{y \in \X_n
    \mid d_n(x, y) \le R\}) $ satisfies $\lim_{n \to \infty} B^{(n)}_R = \infty$. 
  \item There are $\Lambda_{n}\subseteq \X_{n}$ and 
    $\lambda_n = O\left(\frac{\log B^{(n)}_R}{B^{(n)}_R}\right)$
    such that, for every $n \in \N$,
    $Z_{\Lambda_n, \phi^{(n)}}(-\lambda_n) = 0$.
  \end{enumerate}
\end{proposition}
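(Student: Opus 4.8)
The plan is to transfer a known discrete phenomenon to the continuum: independence polynomials of bounded-degree $k$-uniform hypergraphs can have zeros very close to the origin~\cite{zhang2025hypergraph}, and the hard-core model on such a hypergraph can be embedded \emph{exactly} into a Gibbs point process governed by a pure $k$-body potential. For the discrete input, recall that the \emph{independence polynomial} of a $k$-uniform hypergraph $G=(V,E)$ is $Z_G(\mu)\coloneqq\sum_S\mu^{|S|}$, the sum ranging over $S\subseteq V$ that contain no hyperedge of $G$. I would invoke~\cite{zhang2025hypergraph} to fix a sequence of $k$-uniform hypergraphs $G_n=(V_n,E_n)$ with maximum degree $\Delta_n\to\infty$ and a zero $\zeta_n$ of $Z_{G_n}$ with $|\zeta_n|=O(\log\Delta_n/\Delta_n)$; in particular $|\zeta_n|<1$ for large $n$. (If $\zeta_n$ is real and negative the activity produced below is a real number; otherwise the argument is identical with a complex activity of the same modulus.)

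For the continuum model, fix a small constant $\eta=\eta(R)\in(0,R/5)$. Let $\X_n$ be the disjoint union of cells $\Lam_v\cong[0,\eta]$, one per $v\in V_n$, each carrying Lebesgue measure, and let $\Gamma$ be the graph on $V_n$ in which $v\sim w$ iff $\{v,w\}$ lies in a common hyperedge of $G_n$. Equip $\X_n$ with a metric $d_n$ with $d_n((v,s),(w,t))=|s-t|$ if $v=w$ and $d_n((v,s),(w,t))=(R-2\eta)\min(d_\Gamma(v,w),2)+s+t$ otherwise; one checks this is a complete separable metric, that $\X_n$ is bounded, and that for every $x$ the pushforward of $\vol$ under $y\mapsto d_n(x,y)$ is absolutely continuous (the term $s+t$ removes atoms), so \Cref{as:continuity} holds. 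Take $\Lam_n\coloneqq\X_n$ and let $\phi^{(n)}$ be the pure $k$-body potential with $\phi^{(n)}_k(x_1,\dots,x_k)=\infty$ if the points $x_1,\dots,x_k$ lie one in each of the $k$ cells of some hyperedge of $G_n$, and $\phi^{(n)}_k=0$ otherwise. Any such $k$-tuple has diameter at most $(R-2\eta)+2\eta=R$, so $\phi^{(n)}$ has range $\le R$, and it is plainly nonnegative, symmetric and measurable.

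The heart of the argument is an exact identity, obtained by expanding $Z_{\Lam_n,\phi^{(n)}}(\lam)$ according to the number of points placed in each cell. The Boltzmann weight $e^{-H}$ of a configuration equals $1$ if the set of occupied cells contains no hyperedge of $G_n$ and $0$ otherwise: an occupied hyperedge contains a $k$-tuple of infinite energy, while no $k$-tuple spanning fewer than $k$ cells, and no $k$-tuple spanning $k$ cells that do not form a hyperedge, has infinite energy. Summing the per-cell series $\sum_{j\ge1}(\lam\eta)^j/j!=e^{\lam\eta}-1$ then gives
\[
  Z_{\Lam_n,\phi^{(n)}}(\lam)=\sum_{\substack{S\subseteq V_n:\\ S\ \text{contains no hyperedge}}}\bigl(e^{\lam\eta}-1\bigr)^{|S|}=Z_{G_n}\!\bigl(e^{\lam\eta}-1\bigr).
\]
Choosing $\lam=-\lam_n$ with $\lam_n$ the solution of $e^{-\lam_n\eta}-1=\zeta_n$ (which exists since $|\zeta_n|<1$, and satisfies $\lam_n=\Theta(|\zeta_n|)$ because $|\zeta_n|\to0$ and $\eta$ is fixed) yields $Z_{\Lam_n,\phi^{(n)}}(-\lam_n)=Z_{G_n}(\zeta_n)=0$. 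Finally, for $x\in\Lam_v$ one has $\ball_x(R)=\Lam_v\cup\bigcup_{w\sim v}\Lam_w$, so $B_R^{(n)}=\eta(1+\max_v\deg_\Gamma(v))$, which lies between $\eta$ and $\eta(1+(k-1)\Delta_n)$ and tends to infinity; this gives property~(1). For property~(2), since $B_R^{(n)}=O(\Delta_n)$ and $t\mapsto\log t/t$ is decreasing, $\log\Delta_n/\Delta_n=O(\log B_R^{(n)}/B_R^{(n)})$, so $\lam_n=\Theta(|\zeta_n|)=O(\log\Delta_n/\Delta_n)=O\!\bigl(\log B_R^{(n)}/B_R^{(n)}\bigr)$ for fixed $k,R$.

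The genuinely substantial ingredient is the discrete input---the existence of bounded-degree $k$-uniform hypergraphs whose independence polynomial has a zero at distance only $O(\log\Delta/\Delta)$ from the origin, rather than the much larger $\Theta(\Delta^{-1/(k-1)})$ attained, e.g., by the complete $k$-uniform hypergraph---which I would import verbatim from~\cite{zhang2025hypergraph}. On our side the only delicate point is the metric: the requirements that $\phi^{(n)}$ have range $\le R$, that \Cref{as:continuity} hold, and that $B_R^{(n)}$ grow only linearly in $\Delta_n$ (not faster, which would weaken property~(2)) must all hold simultaneously, so non-adjacent cells must be separated to distance $>R$ without introducing atoms into the pushforward measures $\vol_x$ or breaking the triangle inequality---hence the truncated graph distance with the smoothing perturbation $s+t$.
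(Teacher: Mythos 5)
Your proposal is correct and follows essentially the same route as the paper: a metric space of disjoint unit-length cells indexed by the vertices of the hypergraphs from~\cite{zhang2025hypergraph}, a (truncated, smoothed) adjacency-based metric making co-hyperedge cells lie within range $R$ and all others farther away, a pure $k$-body potential forbidding fully occupied hyperedges, and the exact identity $Z_{\Lambda_n,\phi^{(n)}}(\lambda)=Z_{G_n}(e^{\lambda\eta}-1)$ via the surjection/Stirling-number computation. The only differences from the paper's construction are cosmetic choices of the metric (the paper embeds the cells in $\R$ and uses a scaled hop distance with a fractional-part perturbation), and your accounting of $B_R^{(n)}$ and the $O(\log B_R^{(n)}/B_R^{(n)})$ comparison is, if anything, slightly more careful.
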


The takeaway of \Cref{prop:upperBound} is that $B_{R}$ is essentially
the optimal scaling for the activity if one wants to ensure the
analyticity of the pressure in a disk of positive radius. The proof of
\Cref{prop:upperBound} is by construction of an appropriate sequence
of spaces that allow us to make use of a known result about the
location of negative roots in the hypergraph hard-core model. We will
shortly define what this last object is, but first, we state two lemmas
that explains why it is useful. 
\begin{lemma}
  \label{lemma:simulateHypergraph}
  For every $\Delta \in \N$, every $k$-uniform hypergraph $G = (V, E)$
  of maximum degree $\Delta$ and every $R > 0$, there is a complete,
  separable metric space $(\X, d)$ with compatible locally-finite
  volume measure $\vol$, a pure $k$-body potential $\phi$ of range $R$
  and bounded, measurable region $\Lambda \subseteq \X$ such that
  $(\X, d, \vol)$ satisfies \Cref{as:continuity}, $B_R = \Delta + 1$
  and $Z_{\Lambda, \phi}(\lambda) = Z_{G}(e^{\lambda} - 1)$ for all
  $\lambda \in \C$, where $Z_{G}$ denotes the independence polynomial
  of $G$.
\end{lemma}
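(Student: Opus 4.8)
The plan is to realize $Z_G(e^{\lambda}-1)$ as the partition function $Z_{\Lambda,\phi}(\lambda)$ of a Gibbs point process in which every vertex of $G$ is a region of volume exactly one and every edge is a cluster of such regions of diameter at most $R$, with $\phi$ the pure $k$-body potential that forbids $k$ particles from simultaneously occupying $k$ distinct vertex regions whose labels form an edge of $G$. Concretely, I would take $(\X,d)$ to be a complete, separable metric space assembled from disjoint pieces $\{\Lambda_v\}_{v\in V}$: realize each $\Lambda_v$ as a short segment carrying a rescaled length measure so that $\vol(\Lambda_v)=1$, and for each edge $e=\{v_1,\dots,v_k\}\in E$ attach $\Lambda_{v_1},\dots,\Lambda_{v_k}$ to a common junction by short spokes so that the cluster $\bigcup_i\Lambda_{v_i}$ has diameter at most $R$ — which is possible for finite $G$ once the segments are chosen short enough. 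On such a space the distance-pushforwards of $\vol$ are absolutely continuous, so \Cref{as:continuity} holds. Set $\Lambda\coloneqq\bigcup_v\Lambda_v$ and let $\phi$ be the pure $k$-body potential with $\phi_k(y_1,\dots,y_k)=\infty$ exactly when $y_1,\dots,y_k$ lie in $k$ distinct regions $\Lambda_{v_1},\dots,\Lambda_{v_k}$ with $\{v_1,\dots,v_k\}\in E$, and $\phi_k=0$ otherwise. Any such $k$-tuple lies inside a single edge cluster, so $\phi_k$ vanishes whenever $\diam(y_1,\dots,y_k)>R$; hence $\phi$ is repulsive with range $R$.

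The heart of the argument is the identity $Z_{\Lambda,\phi}(\lambda)=Z_G(e^{\lambda}-1)$. For $\V{x}\in\Lambda^m$ let $I(\V{x})\subseteq V$ be the set of vertices $v$ whose region contains a coordinate of $\V{x}$; by the definition of $\phi$, $e^{-H(\V{x})}$ equals $1$ if $I(\V{x})$ is an independent set of $G$ and $0$ otherwise, since a forbidden $k$-subset of coordinates exists precisely when $I(\V{x})$ contains an edge. Decomposing $\Lambda^m=\bigsqcup_{f\colon[m]\to V}\prod_i\Lambda_{f(i)}$ and using $\vol(\Lambda_v)=1$, the $m$-th term of $Z_{\Lambda,\phi}(\lambda)$ equals $\tfrac{\lambda^m}{m!}$ times the number of maps $f\colon[m]\to V$ whose image is independent. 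Summing over $m$ with the image $I$ held fixed, and using that the exponential generating function of surjections onto a fixed $j$-element set is $(e^{\lambda}-1)^{j}$, one obtains $Z_{\Lambda,\phi}(\lambda)=\sum_{I\text{ independent}}(e^{\lambda}-1)^{|I|}=Z_G(e^{\lambda}-1)$; the rearrangement of sums is legitimate by the absolute convergence available because $\phi$ is repulsive (cf.\ the discussion after \eqref{eq:PF}).

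What remains is to make $B_R=\Delta+1$. A ball of radius $R$ about a point of $\Lambda_v$ can only meet $\Lambda_v$ itself together with the clusters of the at most $\Delta$ edges through $v$, so $B_R$ is controlled by the maximum degree, and one can always raise the supremum \emph{up} to a target value by adjoining to $\X\setminus\Lambda$ a disjoint dummy piece supporting a radius-$R$ ball of that volume — which leaves $Z_{\Lambda,\phi}$ untouched. The delicacy is in the other direction: in the naive layout above, a ball near a vertex of full degree already picks up volume of order $(k-1)\Delta$, so for $k\ge 3$ hitting the precise value $\Delta+1$ calls for a more economical arrangement of the regions of the edges through a common vertex, all while keeping every edge cluster of diameter at most $R$ and preserving \Cref{as:continuity}. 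I expect this geometric bookkeeping to be the one genuinely delicate point; the partition-function identity itself is a routine computation.
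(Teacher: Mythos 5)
Your construction and computation are essentially the paper's: one unit-volume region per vertex, a metric arranged so that the regions of any edge lie within distance $R$ of one another while everything else is farther apart, the hard pure $k$-body constraint indexed by edges of $G$, and the surjection identity $\sum_{m\ge j}\frac{\lambda^m}{m!}\,j!\stirling{m}{j}=(e^{\lambda}-1)^{j}$ to convert the sum over labelled tuples into $\sum_{I\ \mathrm{independent}}(e^{\lambda}-1)^{|I|}=Z_G(e^{\lambda}-1)$. The one point you flag as genuinely delicate --- hitting $B_R=\Delta+1$ exactly when $k\ge 3$ --- is in fact not resolved in the paper either: with the metric used there, the radius-$R$ ball about a point of a vertex's interval swallows the full unit interval of every vertex sharing a hyperedge with it, so the supremum is $1+\max_v|\{v'\neq v:\exists e\ni v,v'\}|$, which is $1+\Delta(k-1)$ in the worst case rather than $\Delta+1$. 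This discrepancy is immaterial for \Cref{prop:upperBound}, which only needs $B_R^{(n)}$ to be of order $\Delta_n$ up to $k$-dependent constants so that $\lambda_n=O(\log B_R^{(n)}/B_R^{(n)})$ survives; the right move is therefore to weaken the stated value (e.g.\ to $\Delta+1\le B_R\le 1+\Delta(k-1)$) rather than to hunt for a more economical geometry. With that adjustment your argument is complete; also note that if you realize the junctions and spokes as actual points of $\X$ carrying length measure, they too contribute to $B_R$, so it is cleaner to define the metric abstractly on the disjoint union of intervals, as the paper does, than to embed a physical network.
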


Given \Cref{lemma:simulateHypergraph}, which is proven below,
\Cref{prop:upperBound} follows from the following theorem concerning
the zeros of the hypergraph independence
polynomial. 
\begin{theorem}[{\cite[Theorem~1.2]{zhang2025hypergraph}}]
  \label{lemma:hypergraphZeros}
  For every $k \ge 2$, there is a sequence of $k$-uniform hypergraphs
  $(G_n)_{n \in \N}$ with maximum degrees $\Delta_n \to \infty$, such
  that for some sequence of vertex activities
  $z_n \in O\left(\frac{\log \Delta_n}{\Delta_n}\right)$ such that
  $Z_{G_n}(-z_n) = 0$.
\end{theorem}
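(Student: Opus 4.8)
The plan is to reduce the theorem to the following: for each fixed $k$ and each sufficiently large integer $d$, produce one $k$-uniform hypergraph $G_d$ of maximum degree $\Delta_d = \Theta(d)$ whose independence polynomial $Z_{G_d}(x) = \sum_{I}x^{|I|}$ (sum over independent sets $I$) has a negative real root $-z_d$ with $z_d = O(\log d/d)$; letting $d \to \infty$ then furnishes the required sequence, since $O(\log d/d) = O(\log \Delta_d/\Delta_d)$. The natural bookkeeping device is the \emph{occupation ratio} $\rho_G = Z_G^{+}/Z_G^{-}$ of a rooted hypergraph $(G,v)$, where $Z_G^{\pm}$ collects the independent sets that do / do not contain $v$: since $Z_G = Z_G^{-}(1+\rho_G)$, a zero of $Z_G$ is exactly a value of $x$ with $\rho_G = -1$ (one checks $Z_G^{-} \neq 0$ there along the way).

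For $k = 2$, and more generally for all even $k$, an explicit family works. For $k = 2$ take $G_d = K_{1,d}$, so $Z_{G_d}(x) = (1+x)^{d} + x$; on $(0,1)$ the equation $(1-z)^{d} = z$ has a unique root $z_d$, and $d\log(1-z_d) = \log z_d$ forces $d z_d \sim \log(1/z_d) \sim \log d$, hence $z_d \sim \log d / d$ with $\Delta_d = d$. For even $k \ge 4$ take the ``overlapping hyperstar'' $H_{k,d}$ on vertices $w_1,\dots,w_{k-1},u_1,\dots,u_d$ with edges $\{w_1,\dots,w_{k-1},u_i\}$; a direct count gives $Z_{H_{k,d}}(x) = (1+x)^{d}\bigl[(1+x)^{k-1} - x^{k-1}\bigr] + x^{k-1}$, and at $x = -\epsilon$ (where $x^{k-1} = -\epsilon^{k-1}$, since $k-1$ is odd) one has $Z_{H_{k,d}}(0) = 1 > 0$ while $Z_{H_{k,d}}(-2k\log d/d) < 0$ for large $d$; by the intermediate value theorem there is a root $-z_d$ with $z_d = O(\log d/d)$, and $\Delta_d = d$.

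For odd $k \ge 3$ this sign pattern is reversed ($x^{k-1} = +\epsilon^{k-1}$), and all the natural one-layer gadgets satisfy $Z_G(-\epsilon) > 0$ for small $\epsilon$; here I would use a recursive \emph{$k$-uniform hypertree}. Fix a branching $d$ and depth $L = L(d)$, pick a small base gadget $T_0$, and for $\ell \ge 1$ let $T_\ell$ have a root $v$ lying in $d$ hyperedges, the $i$-th of which consists of $v$ together with the roots of $k-1$ fresh disjoint copies of $T_{\ell-1}$. Splitting on $v \in I$ and on whether all $k-1$ children of a given edge lie in $I$ yields the recursion $\rho_\ell = x\bigl(1 - p_{\ell-1}^{\,k-1}\bigr)^{d}$ with $p_{\ell-1} = \rho_{\ell-1}/(1+\rho_{\ell-1})$, and $G_d := T_L$. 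One wants $x = -z_d$ with $z_d = \Theta(\log d/d)$ and $L$ chosen so that the orbit started at $\rho_0 = \rho_{T_0}$ reaches $\rho_L = -1$; rather than solving the recursion exactly, I would regard $Z_{T_L}(x)$ as a real polynomial in $x$, track the signs of $1 + \rho_\ell$ on a short real interval around $-\log d/d$, and conclude by the intermediate value theorem. Since $T_L$ has degree $d$ at its root and $d+1$ at internal vertices, $\Delta_d = \Theta(d)$.

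The main obstacle is controlling this recursion when $k \ge 3$. For $k = 2$ it is the classical hard-core tree recursion, and the tightness of Shearer's bound is the known phenomenon being echoed; but for $k \ge 3$ the factor $1 - p^{k-1}$ equals $\approx 1$ near a leaf (where $p \approx x \approx 0$), so a naively rooted hypertree never leaves a neighbourhood of an attracting fixed point near $\rho = x$ and in particular never reaches $-1$. The substance of the proof is therefore (i) engineering a bounded-degree base gadget $T_0$ whose occupation ratio has the right size to seed the per-level amplification $|\rho_\ell|/|\rho_{\ell-1}| \approx (1 - p_{\ell-1}^{\,k-1})^{d} \approx 1/z_d \approx d/\log d$ --- which for odd $k$ forces $|p| > 1$, so $T_0$ cannot be built from leaves at all --- and (ii) showing the orbit actually crosses $-1$ at the scale $z_d \asymp \log d/d$, not at the much larger scale $d^{-1/(k-1)}$ produced by elementary hyperstar gadgets. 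Carrying out this analysis uniformly in $k$, especially for odd $k$, is the delicate part.
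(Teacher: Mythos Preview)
The paper does not prove this statement at all: it is quoted verbatim as \cite[Theorem~1.2]{zhang2025hypergraph} and used as a black box in the proof of \Cref{prop:upperBound}. So there is no ``paper's own proof'' to compare against; any argument you supply is going beyond what the paper does.

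Your treatment of even $k$ is correct and self-contained. For $k=2$ the star computation is standard, and for even $k\ge 4$ your overlapping hyperstar $H_{k,d}$ with $Z_{H_{k,d}}(x)=(1+x)^{d}\bigl[(1+x)^{k-1}-x^{k-1}\bigr]+x^{k-1}$ indeed changes sign on $[-2k\log d/d,\,0]$ for large $d$, since $(1-z)^{d}\approx d^{-2k}$ is dominated by $z^{k-1}\approx (\log d/d)^{k-1}$; the maximum degree is $d$, so this gives the claimed bound.

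For odd $k\ge 3$, however, what you have written is not a proof but a description of the difficulty. You correctly observe that the hypertree recursion $\rho_\ell = x\bigl(1-p_{\ell-1}^{\,k-1}\bigr)^{d}$ started from leaves stalls near the attracting fixed point $\rho\approx x$, and that escaping requires a base gadget $T_0$ whose occupation probability satisfies $|p_0|>1$. But you never construct such a $T_0$, nor do you verify that one exists with bounded degree, nor do you carry out step (ii) of showing the orbit actually hits $-1$ at scale $z_d\asymp \log d/d$. The sentence ``Carrying out this analysis uniformly in $k$, especially for odd $k$, is the delicate part'' is an acknowledgement that the argument is incomplete. Since the theorem asserts the conclusion for \emph{every} $k\ge 2$, the odd-$k$ gap is fatal to the proposal as a proof; you would need either to produce the missing gadget and orbit analysis or to invoke Zhang's result as the paper does.
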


\begin{proof}[Proof of \Cref{prop:upperBound}]
 Combine \Cref{lemma:simulateHypergraph} and \Cref{lemma:hypergraphZeros}.
\end{proof}

The following definitions concerning hypergraphs and the hypergraph
independence polynomial are all that will be needed to prove
\Cref{lemma:simulateHypergraph}. A hypergraph $G=(V,E)$ generalizes a
graph by allowing for $E$ to be a set of subsets $e\subset V$ of
arbitrary cardinality $|e|\geq 2$. It is $k$-uniform if $|e|=k$ for
all $e\in E$. The degree of a vertex $v$ is the number of hyperedges
containing $v$. A subset $I\subset V$ is an independent set if
$|I\cap e|<|e|$ for all $e\in E$. Lastly,
$Z_{G}(z) = \sum_{I\subset V} z^{|I|}\ind{I\text{ independent} }$.
\begin{proof}[Proof of \Cref{lemma:simulateHypergraph}] 
  It will be convenient to assume $V = [N]$ for some $N \in \N$. Let
  $d_G$ denote the hop distance in $G$. That is, $d_{G}(v,v')=k$ if
  the shortest sequence $\{e_{i}\}$ of hyperedges such that $v\in e_{1}$, $v'\in
  e_{k}$ and $e_{i}\cap e_{i+1}\neq\emptyset$ has length $k$. Set
  $\X \coloneqq \bigcup_{j \in [N]} [2j, 2j+1]$,
  \begin{equation}
    \label{e:UBd}
    d(x, y) \coloneqq
    \begin{cases}
      \frac{R}{10} \cdot |x-y| &\text{ $x, y \in [2j, 2j+1]$},\\
      \frac{R}{10} \cdot \left( 8d_G\left(\lfloor x \rfloor/2, \lfloor y
        \rfloor/2\right) + \frac{1}{10} \cdot |x - \lfloor x \rfloor|
      + \frac{1}{10} \cdot |y - \lfloor y \rfloor|\right) &\text{ otherwise,}
    \end{cases}
  \end{equation}
  and let $\vol$ be Lebesgue measure on $\R$ restricted to $\X$. In
  the right-hand side of~\eqref{e:UBd}, and below, we have made
  use of our above identification of integers with vertices of $G$.

  We first verify the structural properties of $(\X,d,\vol)$ and \Cref{as:continuity}. It
  is easily checked that $d$ is indeed a metric on $\X$.  To see that
  $(\X, d)$ is complete, we note that every sequence that is Cauchy
  with respect to $d$ must eventually be contained in some interval
  $[2j, 2j+1]$ for some $j \in [N]$. Convergence in $[2j,2j+1]$ is then inherited
  from the completeness of $(\R,|\cdot|)$ since $d$ is a rescaling of
  the Euclidean distance. This verifies completeness. 

  Next, note that the Borel algebra on $(\X, d)$ is inherited from the
  Borel algebra $\mathcal{B}(\R)$ on $\R$, i.e., consists of sets $\{ B\cap \X \mid B\in
  \mathcal{B}(\R)\}$. 
  To see this, note that every Borel subset of $B \subseteq \X$ in
  $\R$ can be partitioned into sets $B_j = B \cap [2j, 2j+1]$ for
  $j \in [N]$.  It suffices then to show that $B_j$ is also in the
  Borel algebra generated from $d$, which again follows from the
  equivalence of metrics on such intervals.

  Since $\vol$ is Lebesgue measure on $\R$ restricted to $\X$, we can
  observe that $B_R = \Delta + 1$.  To see that $(\X, d, \vol)$
  satisfies \Cref{as:continuity}, it is easy to check that, for every
  $x \in \X$, the map
  $r \mapsto \vol(\{y \in \X \mid d(x, y) \le r\})$ is an absolutely
  continuous function on $\R$.

  We now define $\phi$. Set, for $(x_1, \dots, x_k) \in \X^k$,
  $\phi_k(x_1, \dots, x_k) = \infty \cdot \ind{E}(\{\lfloor x_1
  \rfloor/2, \dots, \lfloor x_k \rfloor/2\})$. Otherwise set
  $\phi_j = 0$ for all $j \neq k$.  By definition, $\phi$ is a pure
  $k$-body potential and has range at most $R$.
    
  It remains to show that
  $Z_{\Lambda, \phi}(\lambda) = Z_{G}(e^{\lambda} - 1)$ for all
  $\lambda \in \C$. Given $(x_{1},\dots, x_{m})$, let
  $I(x_1, \dots, x_m) \coloneqq \{\lfloor x_i \rfloor/2 \mid i \in
  [m]\}$ denote the corresponding vertex set in $G$. Note that
  $H(x_{1},\dots, x_{m})=0$ if $I(x_{1},\dots, x_{m})$ is an
  independent set in $G$, and $H(x_{1},\dots, x_{m})=\infty$
  otherwise.  Evidently $|I(x_1, \dots, x_m)| \le m$, with
  $|I(x_1, \dots, x_m)| = m$ if and only if each all $x_i$ are in
  distinct intervals $[2j, 2j+1]$ for $j \in [N]$.  Recall that for
  all finite sets $A, B$ there are $|B|! \stirling{|A|}{|B|}$
  subjective maps from $A$ to $B$, where $\stirling{a}{b}$ denotes
  Stirling number of the second kind. Using this, observe that for
  every non-empty independent set $S$ in $G$ and every $m \ge |S|$, it
  holds that
  \begin{equation*}
    \int_{\Lambda^m} \ind{I(\V{x}) = S} \vol(\diff \V{x}) = |S|!
    \stirling{m}{|S|}.
  \end{equation*}
  Combining the observations above, we have
  \begin{equation*}
    Z_{\Lambda, \phi}(\lambda) 
    = 1 + \sum_{m \in \N} \frac{\lambda^{m}}{m!} \int_{\Lambda^m} e^{-H(\V{x})} \vol(\diff \V{x}) = 1 + \sum_{m \in \N} \frac{\lambda^{m}}{m!}  \sum_{\ell = 1}^{m} \ell! \stirling{m}{\ell} |\mathcal{I}_{\ell}|,   
  \end{equation*}
  where $\mathcal{I}_{\ell}$ is the set of independent sets of $G$ with cardinality
  $\ell$.  Exchanging the sums and using that
  $\sum_{m=\ell}^{\infty} \frac{x^{m}}{m!} \stirling{m}{\ell} =
  \frac{(e^x - 1)^{\ell}}{\ell!}$, we obtain
  \begin{equation*}
    Z_{\Lambda, \phi}(\lambda) 
    = 1 + \sum_{\ell \in \N} |\mathcal{I}_{\ell}| \sum_{m
    =\ell}^{\infty} \frac{\lambda^{m}}{m!}  \ell! \stirling{m}{\ell}
    = 1 + \sum_{\ell \in \N} |\mathcal{I}_{\ell}| (e^{\lambda} -
    1)^{\ell}  = Z_{G}(e^{\lambda} - 1).\qedhere
  \end{equation*}
\end{proof}

\end{document}